\theoremstyle{definition}
\newtheorem{definition}{Definition}
\newtheorem{theorem}{Theorem}
\newtheorem{lemma}[theorem]{Lemma}
\newcommand{\GG}{\mathcal G}
\newcommand{\EE}{\mathcal E}
\newcommand{\LL}{\mathcal L}
\newcommand{\HH}{\mathcal H}
\newcommand{\QQ}{\mathcal Q}
\newcommand{\DD}{\mathcal D}
\newcommand{\TT}{\mathcal T}
\newcommand{\RR}{\mathcal R}
\newcommand{\UU}{\mathcal U}
\newcommand{\AAA}{\mathcal A}
\newcommand{\SSS}{\mathcal S}
\newcommand{\CEIL}[1]{\left\lceil #1\right\rceil}
\newcommand{\BIGP}[1]{\left( #1\right)}
\newcommand{\BIGLR}[3]{\left#1#3\right#2}
\long\def\longdelete#1{}
\title{\bf Capacitated Domination: Constant Factor Approximations for Planar Graphs
\thanks{This work was supported in part by the National Science Council, Taipei 10622, Taiwan, under
the Grants NSC99-2911-I-002-055-2, NSC98-2221-E-001-007-MY3, and NSC98-2221-E-001-008-MY3.}
}
\author{Mong-Jen Kao}
\author{D.T. Lee}
\affil{Department of Computer Science and Information Engineering, \\
National Taiwan University, Taiwan. \\
Email: d97021@csie.ntu.edu.tw, dtlee@iis.sinica.edu.tw}
\date{}
\begin{document}

\maketitle

\begin{abstract}
We consider the capacitated domination problem, which
models a service-requirement assigning scenario and which is also a
generalization of the dominating set problem. In this problem, we are given a graph
with three parameters defined on the vertex set, which are cost, capacity, and demand.
The objective of this problem is to compute a demand assignment of least cost,
such that the demand of each vertex is fully-assigned to some of its closed neighbours
without exceeding the amount of capacity they provide.
In this paper, we provide 
the first constant factor approximation for this problem on planar graphs,
based on
a new perspective on the hierarchical structure of
outer-planar graphs. 
We believe that this new perspective and technique can be applied 
to other capacitated covering problems to help tackle vertices of large degrees.
\end{abstract}

\section{Introduction}

For decades, {\em Dominating Set} problem has been one of the most
fundamental and well-known problems in both 
graph theory and combinatorial optimization. Given a graph $G=(V,E)$ and an integer $k$,
{\em Dominating Set} asks for a subset $D\subseteq V$ whose cardinality
does not exceed $k$ such that every vertex in the graph either
belongs to this set or has a neighbour which does. As this problem is
known to be NP-hard, 
approximation algorithms have been proposed in the literature~\cite{174650,DSH82,804034}. 

A series of study on capacitated covering problem was initiated by Guha et al.,~\cite{989540},
which addressed the capacitated vertex cover problem from a scenario of Glycomolecule ID (GMID) placement.
Several follow-up papers have appeared since then, studying both this topic and related 
variations~\cite{1151271,Gandhi:2006:IAA:1740416.1740428,Gandhi:2002:DRB:645413.652158}.
These problems are also closely related to work on the capacitated facility location problem,
which has drawn a lot of attention since 1990s. See~\cite{Chudak:2005:IAA:1047770.1047776,Shmoys:1997:AAF:258533.258600}.


Motivated by a general service-requirement assignment scenario, Kao et al., 
\cite{Kao:2010:AAC:1881195.1881214,springerlink:10.1007/s00453-009-9336-x} considered a generalization of the dominating set problem 
called {\em Capacitated Domination},
which is defined as follows.
%
%
%
%
%
Let $G=(V,E)$ be a graph with three non-negative parameters defined on each vertex $u \in V$,
referred to as the cost, the capacity, and the demand, 
further denoted by $w(u)$, $c(u)$, and $d(u)$, respectively. 
The demand of a vertex stands for the amount of service it requires from its adjacent vertices, 
including the vertex itself, while the capacity of a vertex represents the amount of service 
each 
multiplicity (copy) of that vertex can provide.

By a demand assignment function $f$
we mean a function which maps pairs of vertices to non-negative real numbers.
Intuitively, $f(u,v)$ denotes the amount of demand of $u$ that is assigned to $v$.
We use $N_G(v)$ to denote the set of neighbours of a vertex $v \in V$.

\begin{definition}[feasible demand assignment function] \label{def_feasible_assignment}
A demand assignment function $f$ is said to be feasible if
$\sum_{u \in N_G[v]}f(v,u) \ge d(v)$, for each $v \in V$,
where $N_G[v] = N_G(v) \cup \BIGLR{\{}{\}}{v}$ denotes the neighbours of $v$ unions $v$ itself.
\end{definition}

Given a demand assignment function $f$, the corresponding
capacitated dominating multi-set $\DD(f)$ is defined as follows. For each vertex $v \in V$,
the {\it multiplicity} of $v$ in $\DD(f)$ is defined to be 
$x_f(v) = \CEIL{\frac{\sum_{u \in N_G[v]}f(u,v)}{c(v)}}.$
%
The cost of the assignment function $f$, denoted $w(f)$, is defined to be $w(f)= \sum_{u\in V} w(u)\cdot x_f(u)$.

\begin{definition}[Capacitated Domination Problem]
Given a graph $G=(V,E)$ with cost, capacity, and demand defined on each vertex,
the capacitated domination problem asks for a feasible demand assignment
function $f$ such that $w(f)$ is minimized.
\end{definition}

%



For this problem,
Kao et al.,~\cite{springerlink:10.1007/s00453-009-9336-x}, presented
a $(\Delta+1)$-approximation for general graphs, where $\Delta$ is the maximum vertex degree of the graph,
and a polynomial time approximation scheme for trees, which they proved to be NP-hard. 
In a following work~\cite{Kao:2010:AAC:1881195.1881214}, they provided more approximation algorithms
and complexity results
for this problem. 
%
On the other hand,
Dom et al., \cite{DBLP:conf/iwpec/DomLSV08} 
considered a variation of this problem where the number of multiplicities available at each vertex is limited and
proved the {\it W[1]}-hardness when parameterized by treewidth and solution size.
Cygan et al.,~\cite{springerlink:10.1007/978-3-642-13731-08}, made an attempt toward the exact solution and presented an
$O(1.89^n)$ algorithm when each vertex has unit demand. This result was further improved by Liedloff et al.,~\cite{Liedloff:2010:SCD:1939238.1939250}.


\paragraph{Our Contributions}

We provide the first constant factor approximation algorithms 
for the capacitated domination problem 
on planar graphs. 
This result can be considered a break-through with respect to the pseudo-polynomial 
time approximations
given 
in~\cite{Kao:2010:AAC:1881195.1881214}, which is based on a dynamic programming
on graphs of bounded treewidth.
The approach used in~\cite{Kao:2010:AAC:1881195.1881214} stems from the fact that 
vertices of large degrees will fail most of the techniques that 
transform a pseudo-polynomial time dynamic programming algorithm into approximations,
i.e., the error accumulated at vertices of large degrees could not be bounded.


In this work, we tackle this problem using a new approach.
%
Specifically, we give a new perspective toward the hierarchical structure of outer-planar graphs,
which enables us to further tackle vertices of large degrees.
Then we analyse both the primal and the dual linear programs of this problem to obtain the claimed result.
%
We believe that the approach we provided in this paper can be applied to other capacitated covering
problems to help tackle vertices of large degrees as well.
%
%
%
%

\section{Preliminary} \label{preliminary}

We assume that all the graphs considered in this paper are simple
and undirected. Let $G=(V,E)$ be a graph.
We denote the number of vertices, $|V|$, by $n$.
The set of neighbors of a
vertex $v \in V$ is denoted by $N_G(v) = \{u:(u,v) \in E\}$. The
closed neighborhood of $v \in V$ is denoted by $N_G[v] = N_G(v)
\cup \{v\}$.
We use $deg_G(v)$ and $deg_G[v]$ to denote the cardinality of $N_G(v)$
and $N_G[v]$, respectively.
The subscript $G$ in $N_G[v]$ and $deg_G[v]$ will be omitted when
there is no confusion.

A planar embedding of a graph $G$ is a drawing of $G$
in the plane such that the edges intersect only at their endpoints.
A graph is said to be planar if it has a planar embedding.
An outer-planar graph is a graph which adopts a planar embedding
such that all the vertices lie on a fixed circle, and all the edges are 
straight lines drawn inside the circle.
%
%
For $k \ge 1$, $k$-outerplanar graphs are defined as follows.
A graph is $1$-outerplanar if and only if it is outer-planar.
For $k>1$, a graph is called $k$-outerplanar if it has a planar embedding
such that the removal of the vertices on the unbounded face results
in a $(k-1)$-outerplanar graph.

\begin{figure*}
\centering
\fbox{\begin{minipage}{0.8\textwidth}
\begin{alignat}{2}
& \mathrm{Minimize} \quad \sum_{u \in V}w(u)x(u) \notag \\
& \text{subject to} \notag \\
& \quad \sum_{v \in N[u]}f(u,v) - d(u) \ge 0, && u \in V \notag \\
& \quad c(u)x(u) - \sum_{v \in N[u]}f(v,u) \ge 0, \qquad && u \in V \notag \\
& \quad d(v)x(u)-f(v,u) \ge 0, && v \in N[u], \enskip u \in V \notag \\
& \quad f(u, v) \ge 0,\enskip x(u) \in \mathbb{Z}^+ \cup \{0\}, && u,v \in V \label{ILP_cd}
\end{alignat}
\end{minipage}}
\end{figure*}

An integer linear program (ILP) for capacitated domination is given in $(\ref{ILP_cd})$.
The first inequality ensures the feasibility of the demand assignment function $f$
required in Definition~\ref{def_feasible_assignment}.
In the second inequality, we model the multiplicity function $x$ as defined.
The third constraint, $d(v)x(u)-f(v,u) \ge 0$, which seems unnecessary
in the problem formulation, is required to bound the integrality gap between the optimal solution
of this ILP and that of its relaxation.
To see that this additional constraint does not alter the optimality of any optimal solution, 
we have the following lemma.

\begin{lemma} \label{lemma_redundancy_ILP_additional_constraint}
Let $f$ be an arbitrary optimal demand assignment function. 
We have $d(v)\cdot x_f(u) - f(v,u) \ge 0$ for all $u \in V$ and $v \in N[u]$.
\end{lemma}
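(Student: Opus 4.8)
The claim is that an optimal demand assignment $f$ never "overloads" a single assignment edge: we always have $f(v,u) \le d(v)\cdot x_f(u)$. The plan is to argue by contradiction. Suppose there is an optimal $f$ and a pair $u \in V$, $v \in N[u]$ with $f(v,u) > d(v)\cdot x_f(u)$. Since the total demand of $v$ is only $d(v)$, and $f(v,u)$ alone exceeds $d(v)\cdot x_f(u) \ge d(v)$ (using $x_f(u)\ge 1$ whenever $f(v,u)>0$), we should first observe that $x_f(u)$ must be at least $1$, so in particular $u$ contributes $w(u)\cdot x_f(u)$ to the cost with $x_f(u)\ge 1$; more importantly, $f(v,u) > d(v) \ge \sum_{w\in N[v]} f(v,w) - (\text{other terms})$, so the amount $f(v,u)$ carries strictly more than $v$'s entire demand only if... wait — $f(v,u)\le d(v)$ is not automatic since feasibility is a lower bound, not an upper bound. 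So the real content is that an optimal solution would never assign $v$ more than its demand to begin with, and never more than $d(v)x_f(u)$ to a particular neighbour.

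**The key reduction step.** The main step is: given a violating optimal $f$, construct $f'$ by reducing $f(v,u)$ down to $d(v)\cdot x_f(u)$ (or, more cleanly, first reduce each $f(v,\cdot)$ so that $\sum_{w\in N[v]} f(v,w) = d(v)$ exactly, redistributing nothing, just scaling down the oversends), and show that $f'$ is still feasible and $w(f') \le w(f)$. Feasibility of $f'$ for $v$'s own constraint is the delicate point if we scale: we must keep $\sum_{w} f'(v,w) \ge d(v)$. The cleanest route is to only decrease $f(v,u)$ by $\delta = f(v,u) - d(v) x_f(u) > 0$ while keeping every other $f$-value fixed. Then $v$'s feasibility constraint $\sum_{w\in N[v]} f'(v,w) \ge d(v)$ might fail — so this only works if the slack in $v$'s constraint is at least $\delta$. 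Hence the correct argument is subtler: I expect one shows that in an optimal solution $\sum_{w\in N[v]} f(v,w) = d(v)$ exactly (otherwise scale all of $v$'s outgoing demand down proportionally, which only decreases every $x_f(\cdot)$, hence decreases cost or keeps it equal, contradicting nothing but letting us assume tightness WLOG). Once $\sum_{w} f(v,w) = d(v)$, then $f(v,u) \le d(v) = d(v)\cdot 1 \le d(v)\cdot x_f(u)$, using $x_f(u)\ge 1$, which holds because $f(v,u)>0$ forces $\sum_{w'\in N[u]} f(w',u) \ge f(v,u) > 0$, hence $x_f(u) = \lceil (\sum_{w'} f(w',u))/c(u)\rceil \ge 1$.

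**Assembling the proof.** So the steps, in order, are: (1) For an optimal $f$, reduce to the case $\sum_{w\in N[v]} f(v,w) = d(v)$ for every $v$ — justify this by a proportional scaling argument on each vertex's outgoing assignment, noting that decreasing $f$-values can only decrease each multiplicity $x_f(\cdot)$ and hence the cost, so an optimal solution with all these constraints tight exists (and for the statement as phrased, which quantifies over an arbitrary optimal $f$, one must instead show any optimal $f$ already has this tightness, again by the scaling argument: strict slack would let us strictly decrease, or at least not increase, and a careful version shows you cannot have been forced into slack). (2) Given tightness, bound $f(v,u) \le \sum_{w\in N[v]} f(v,w) = d(v)$. (3) Show $x_f(u) \ge 1$ whenever $f(v,u) > 0$ (the case $f(v,u)=0$ being trivial since the left side $d(v)x_f(u)$ is nonnegative). (4) Combine: $f(v,u) \le d(v) \le d(v)\cdot x_f(u)$.

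**Anticipated obstacle.** The main obstacle is step (1): turning "an optimal solution with tight demand constraints exists" into "every optimal solution has tight demand constraints," since the lemma is stated for an arbitrary optimal $f$. If $v$ has strict slack, $\sum_{w} f(v,w) > d(v)$, I would pick the neighbour $u^*$ receiving the largest $f(v,u^*)$, decrease $f(v,u^*)$ by the slack amount (or down to where the constraint becomes tight), and argue the multiplicity $x_f(u^*)$ cannot increase — indeed it can only decrease or stay equal since we only decreased an incoming contribution — so the cost does not increase. If the cost strictly decreases we contradict optimality of $f$; if it stays equal we have produced another optimal solution, which is fine for constructing the bound but we still need the inequality for the original $f$. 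The resolution is that if $f(v,u) > d(v) x_f(u)$ with slack present, then decreasing $f(v,u)$ itself (it is a candidate for $u^*$, being large) strictly below $d(v)x_f(u)+ (\text{something})$ while staying feasible would strictly drop $\sum_w f(v,w)$ but need not change any multiplicity; to force a contradiction with optimality one needs the decrease to strictly lower some $x_f$, which requires knowing the received sum at $u$ was "just above" a multiple of $c(u)$ — not guaranteed. Thus I expect the actual proof handles this by the standard trick: first establish the existence of an optimal solution satisfying all three properties, then observe the third ILP constraint $d(v)x(u) - f(v,u)\ge 0$ is satisfied by that solution, so adding this constraint to the ILP does not change its optimum; the lemma as literally stated about "an arbitrary optimal $f$" is then most naturally proved by the contradiction-via-scaling argument sketched above, with the edge cases $f(v,u)=0$ and $x_f(u)=0$ dispatched separately.
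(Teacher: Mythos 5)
Your proposal is correct and takes essentially the same route as the paper: modify the optimal solution without loss of generality so that $f(v,u) \le d(v)$, then observe that $x_f(u) \ge 1$ whenever $f(v,u) > 0$, giving $f(v,u) \le d(v) \le d(v)\cdot x_f(u)$, with the cases $f(v,u)=0$ and $x_f(u)=0$ trivial. The only differences are cosmetic — the paper simply caps the single value $f(v,u)$ at $d(v)$ instead of your proportional scaling to full tightness, and the ``arbitrary optimal $f$'' subtlety you rightly flag is exactly what the paper also glosses over with its ``without loss of generality,'' the real content being (as you conclude) that some optimal solution satisfies the added constraint, so the optimum of $(\ref{ILP_cd})$ is unchanged.
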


\begin{proof}[Proof of Lemma~\ref{lemma_redundancy_ILP_additional_constraint}]
Without loss of generality, we may assume that $d(v) \ge f(v,u)$.
For otherwise, we set $f(v,u)$ to be $d(v)$ and the resulting assignment would be feasible
and the cost can only be better.
If $x_f(u) = 0$, then we have $f(v,u)=0$ by definition, and this inequality holds trivially.
Otherwise, if $x_f(u) \ge 1$, then $d(v)\cdot x_f(u)-f(v,u) \ge f(v,u)\cdot \BIGP{x_f(u)-1}\ge 0$.
\end{proof}

However, without this constraint, the integrality gap can be arbitrarily large.
This is illustrated by the following example.
Let $\alpha > 1$ be an arbitrary constant, 
and $\TT(\alpha)$ be an $n$-vertex star, where each vertex has unit demand and unit cost. 
The capacity of the central vertex is set to be $n$, which is sufficient to cover the demand of the entire graph,
while the capacity of each of remaining $n-1$ petal vertices is set to be $\alpha n$.

\begin{lemma} \label{lemma_necessity_ILP_additional_constraint}
Without the additional constraint $d(v)x(u)-f(v,u) \ge 0$, the integrality gap of the ILP $(\ref{ILP_cd})$
on $\TT(\alpha)$ is 
$\alpha$, where $\alpha > 1$ is an arbitrary constant.
\end{lemma}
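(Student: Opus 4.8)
The plan is to compute the optimal integral cost and the optimal fractional cost on $\TT(\alpha)$ separately, and to show that their ratio is exactly $\alpha$. First I would fix notation: let the central vertex be $r$ with $c(r)=n$, and let the petals be $p_1,\dots,p_{n-1}$, each with $c(p_i)=\alpha n$; every vertex has unit cost and unit demand. I would observe that without the third constraint, the ILP $(\ref{ILP_cd})$ is exactly the natural relaxation of the capacitated domination problem on this instance, so ``integrality gap'' here means the ratio between the best integral and best fractional assignments.

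For the integral optimum, I would argue that any integral solution must ``open'' at least one vertex, i.e., have $x(u)\ge 1$ for some $u$, and that using only the center is feasible: setting $f(v,r)=d(v)=1$ for every $v\in V$ gives total demand $n$ at $r$, so $x_f(r)=\lceil n/n\rceil = 1$, with cost $1$. I would then check this cannot be beaten: any feasible integral solution must satisfy the demand of each petal $p_i$, whose closed neighbourhood is $\{p_i,r\}$, so either $x(p_i)\ge 1$ or the demand of $p_i$ is routed to $r$ forcing $x(r)\ge 1$; in all cases at least one vertex is opened, giving integral cost exactly $1$.

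For the fractional optimum, the key point is that fractional multiplicities let each petal ``self-serve'' extremely cheaply: setting $f(p_i,p_i)=1$ makes $x(p_i)=1/(\alpha n)$, and routing all petal demand this way plus the center's own demand $f(r,r)=1$ costing $x(r)=1/n$, the total fractional cost is $(n-1)\cdot\frac{1}{\alpha n}+\frac{1}{n} = \frac{n-1+\alpha}{\alpha n}$, which tends to $1/\alpha$ as $n\to\infty$. I would then verify this is essentially optimal fractionally — any feasible fractional solution must push each unit of demand somewhere, and the cheapest place to absorb a unit of demand is a petal (cost per unit of demand $\frac{1}{\alpha n}$) versus the center (cost per unit $\frac{1}{n}$), so the LP optimum is $\frac{n-1+\alpha}{\alpha n}$, attained by the assignment above. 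Hence the gap is $\frac{1}{(n-1+\alpha)/(\alpha n)} = \frac{\alpha n}{n-1+\alpha} \to \alpha$ as $n \to \infty$.

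The main obstacle is the lower bound on the fractional optimum: I must rule out cleverer fractional assignments, in particular ones that split a petal's demand between itself and $r$, or that exploit the interaction between the second constraint $c(u)x(u)\ge\sum_v f(v,u)$ at $r$ and the shared capacity there. I expect this to follow from a direct LP duality or exchange argument — since each petal's demand can only go to itself or to $r$, and ``per unit of served demand'' the petal is cheaper by a factor $\alpha$, any assignment routing petal demand to $r$ can be rerouted to the petal without increasing cost; after this normalization the cost is pinned down exactly. A minor subtlety is that the statement says the gap ``is $\alpha$'' for an arbitrary constant $\alpha>1$, so I would make explicit that the gap on $\TT(\alpha)$ is $\frac{\alpha n}{n-1+\alpha}$, which is strictly less than $\alpha$ for finite $n$ but approaches $\alpha$ as $n\to\infty$, so the gap is unbounded over the family $\{\TT(\alpha)\}$ and in particular can be made arbitrarily close to any prescribed $\alpha$.
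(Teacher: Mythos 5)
Your overall strategy (compute the integral and fractional optima separately and take their ratio) is the same as the paper's, and your integral bound of $1$ is correct. But there is a concrete error in your fractional solution: you assign the center's demand to the center itself, $f(r,r)=1$, paying $\tfrac{1}{n}$ for it, even though $r$ is adjacent to every petal and its demand can be routed to a petal at cost $\tfrac{1}{\alpha n}$ per unit. The paper's fractional solution routes \emph{all} $n$ units of demand (including the center's) to petal vertices, for a total cost of $n\cdot\tfrac{1}{\alpha n}=\tfrac{1}{\alpha}$, which is strictly cheaper than your $\tfrac{n-1+\alpha}{\alpha n}$. Your own ``rerouting'' argument in the last paragraph --- that demand should always be pushed to the cheapest absorber, and a petal is cheaper than the center by a factor of $\alpha$ --- contradicts your construction: applied consistently, it reroutes $f(r,r)$ to a petal and yields exactly $\tfrac{1}{\alpha}$. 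A matching lower bound is immediate from the second constraint: any feasible fractional solution satisfies $x(u)\ge \sum_v f(v,u)/c(u)$, so its cost is at least $\tfrac{1}{\alpha n}\sum_{u,v}f(v,u)\ge \tfrac{n}{\alpha n}=\tfrac{1}{\alpha}$, since $c(u)\le\alpha n$ for every $u$ and at least $n$ units of demand must be assigned.

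The consequence of the error is that your stated conclusion is weaker than (and technically different from) the lemma: you claim the gap is $\tfrac{\alpha n}{\,n-1+\alpha\,}$, strictly less than $\alpha$ for finite $n$ and only approaching $\alpha$ asymptotically, whereas the gap on $\TT(\alpha)$ is exactly $\alpha$ for every $n\ge 2$. The fix is a one-line change to the fractional assignment plus the lower-bound observation above; with that, no limit argument is needed.
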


\begin{proof}[Proof of Lemma~\ref{lemma_necessity_ILP_additional_constraint}]
The optimal dominating set consists of a single multiplicity of the central vertex with unit cost,
while the optimal fractional solution is formed by spending $\frac{1}{\alpha n}$ multiplicity 
at a petal vertex for each unit demand from the vertices of this graph, 
making an overall cost of $\frac{1}{\alpha}$ and therefore an arbitrarily large integrality gap.
\end{proof}

Indeed, with the additional constraint applied, we can refrain from
unreasonably assigning a small amount of demand to any vertex in any fractional solution.
Take a petal vertex, say $v$, from $\TT(\alpha)$ as example,
given that $d(v)=1$ and $f(v,v)=1$, this constraint would force $x(v)$ to be at least $1$,
which prevents the aforementioned situation from being optimal.


For the rest of this paper, for any graph $G$, we denote the optimal values to the integer linear program $(\ref{ILP_cd})$ and
to its relaxation by $OPT(G)$ and $OPT_f(G)$, respectively.
Note that $OPT_f(G) \le OPT(G)$.


\section{Constant Approximation for Outer-planar Graphs} \label{section_outerplanar}

Without loss of generality, we assume that the graphs are connected.
Otherwise we simply apply the algorithm to each of the connected component separately.
%
%
In the following, we first classify the outer-planar graphs into a class of graphs called
general-ladders and show
how the corresponding general-ladder representation can be extracted in $O(n\log^3n)$ time
in \S\ref{subsection_structure}.
Then we consider in \S\ref{subsection_reduction} and \S\ref{section_greedy_charging} 
both the primal and the dual programs of the relaxation of $(\ref{ILP_cd})$
to further reduce a given general-ladder and obtain a constant factor approximation.
We analyse the algorithm in \S\ref{section_overall_analysis} and 
extend our result to planar graphs in \S\ref{section_extension_planar}.


\begin{figure}[h]
\centering
\includegraphics[scale=0.8]{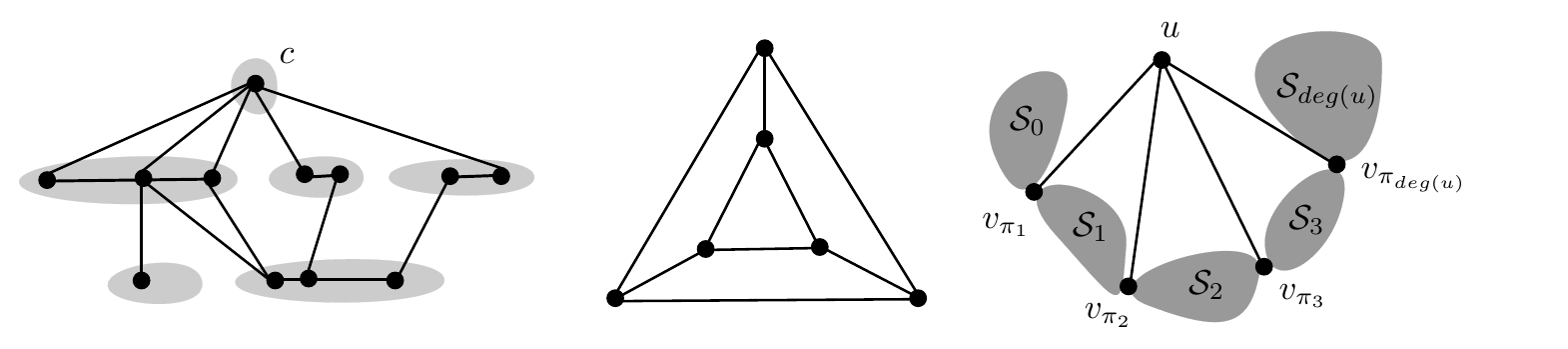}
\caption{(a) A general-ladder with anchor $c$. 
(b) A 2-outerplanar graph which fails to be a general-ladder.
(c) The subdivision formed by a vertex $u$ in an outer-planar embedding.}
\label{figure_ladders_new}
\end{figure}

\subsection{The Structure} \label{subsection_structure}

First we define the notation which we will use later on.
By a total order of a set we mean that each pair of elements in the set can be compared,
and therefore an ascending order of the elements is well-defined.
Let $P = \left( v_1, v_2, \ldots, v_k\right)$ be a path. We say that $P$ is an \emph{ordered path}
if a total order $v_1 \prec v_2 \prec \ldots \prec v_k$ or 
$v_k \prec v_{k-1} \prec \ldots \prec v_1$ is defined on the set of vertices.

\begin{definition}[General-Ladder]
A graph $G = (V,E)$ is said to be a general-ladder if 
a total order on the set of vertices is defined, and
$G$ is composed of
a set of layers $\{\LL_1, \LL_2, \ldots, \LL_k\}$, where each layer is 
a collection of subpaths of an ordered path 
such that the following holds.
The top layer, $\LL_1$, consists of a single vertex, which is referred to as the anchor, and
for each $1< j < k$ and $u,v \in \LL_j$, we have
(1) $N[u] \subseteq \LL_{j-1} \cup \LL_j \cup \LL_{j+1}$, and
(2) $u \prec v$ implies $\max_{p \in N[u] \cap \LL_{j+1}}p \preccurlyeq \min_{q \in N[v] \cap \LL_{j+1}}q$.
%
\end{definition}

Note that
each layer in a general-ladder consists of a set of ordered paths
which are possibly connected only to vertices in the neighbouring layers. 
See Fig.~\ref{figure_ladders_new}~(a).
Although the definition of general-ladders captures the essence and simplicity of an ordered hierarchical structure, there are
planar graphs which fall outside this framework.
See also Fig.~\ref{figure_ladders_new}~(b).

In the following, we state and argue that every outerplanar graph meets the requirements of a general-ladder.
We assume that an outer-planar embedding for any outer-planar graph is given as well. 
Otherwise we apply the $O(n\log^3n)$ algorithm
provided by Bose~\cite{Bose97onembedding} to compute such an embedding.
%



%
Let $G=(V,E)$ be an outer-planar graph, $u \in V$ be an arbitrary vertex, and 
$\EE$ be an outer-planar embedding of $G$.
We fix $u$ to be the smallest element and define a total order on the vertices of $G$ 
according to their orders of appearances on the outer face of $\EE$ in a counter-clockwise order.
For convenience, we label the vertices such that $u=v_1$ and $v_1 \prec v_2 \prec v_3 \prec \ldots \prec v_n$.


Let $N(u) = \BIGLR{\{}{\}}{v_{\pi_1}, v_{\pi_2}, \ldots, v_{\pi_{deg(u)}}}$ denote the neighbours of $u$ such that
$v_{\pi_1} \prec v_{\pi_2} \prec \ldots \prec v_{\pi_{deg(u)}}$.
$N(u)$ divides the set of vertices except $u$ into $deg(u)+1$ subsets, namely, 
$\SSS_0 = \{v_2, v_3, \ldots, v_{\pi_1}\}$,
$\SSS_i = \{v_{\pi_i}, v_{\pi_i+1}, \ldots, v_{\pi_{i+1}}\}$ for $1\le i< deg(u)$, and 
$\SSS_{deg(u)} = \{v_{\pi_{deg(u)}}, v_{\pi_{deg(u)}+1}, \ldots, v_n\}$.
See Fig.~\ref{figure_ladders_new}~(c) for an illustration.

%
\begin{figure}[h]
\centering \fbox{\begin{minipage}{0.4\textwidth}
\centering
\includegraphics[scale=0.8]{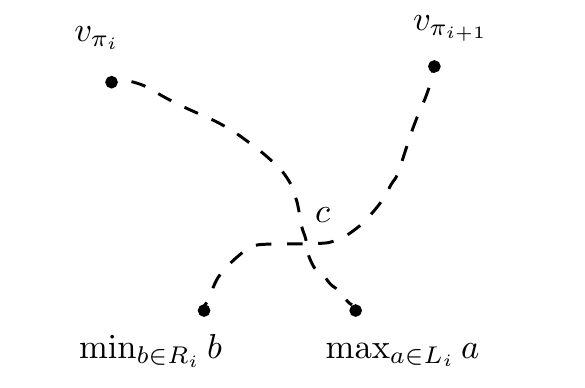}
\caption{A contradiction led by $\min_{b \in R_i}b \prec \max_{a \in L_i}a$.}
\label{figure_subset_partition_conti}
\end{minipage}}
\end{figure}

For any $0<i<j<deg(u)$, $v_{\pi_{i-1}} \prec p \prec v_{\pi_i}$, and $v_{\pi_{j-1}} \prec q \prec v_{\pi_j}$,
there is no edge connecting $p$ and $q$. 
Otherwise it will result in a crossing with the edge $(u, v_{\pi_i})$,
contradicting to the fact that $\EE$ is a planar embedding.

For $1 \le i < deg(u)$, 
we partition $\SSS_i$ into two sets $L_i$ and $R_i$ as follows.
Let $d_{\SSS_i}$ denote the distance function defined on the induced subgraph of $\SSS_i$.
Let $L_i = \BIGLR{\{}{\}}{v: v\in \SSS_i, d_{\SSS_i}\BIGP{v_{\pi_i}, v} \le d_{\SSS_i}\BIGP{v, v_{\pi_{i+1}}}}$ and
$R_i = \SSS_i \backslash L_i$. 


\begin{lemma} \label{lemma_structure_outerplanar_subset_partition}
We have $\max_{a \in L_i}a \prec \min_{b \in R_i}b$ for all $1\le i<deg(u)$.
\end{lemma}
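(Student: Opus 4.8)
The plan is to argue by contradiction, exploiting planarity of the embedding $\EE$ together with the triangle-inequality-like behaviour of the distance function $d_{\SSS_i}$ on the induced subgraph of $\SSS_i$. Suppose the claim fails, so there exist $a \in L_i$ and $b \in R_i$ with $b \prec a$, i.e. $\min_{b \in R_i} b \prec \max_{a \in L_i} a$. First I would observe that the vertices of $\SSS_i$ occupy a contiguous arc of the outer face, bounded by $v_{\pi_i}$ and $v_{\pi_{i+1}}$ (both of which lie in $\SSS_i$ by definition), and that by the crossing argument already made in the text no edge leaves this arc except through $v_{\pi_i}$ or $v_{\pi_{i+1}}$; hence $d_{\SSS_i}$ is the relevant metric for shortest paths inside this block. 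Since $a \in L_i$ there is a path $Q_a$ of length $\le d_{\SSS_i}(a, v_{\pi_{i+1}})$ from $v_{\pi_i}$ to $a$, and since $b \in R_i$ there is a path $Q_b$ of length $\le d_{\SSS_i}(v_{\pi_i}, b)$ from $b$ to $v_{\pi_{i+1}}$, with these two path-length bounds being the defining inequalities of $L_i$ and $R_i$ (the latter with its inequality reversed).

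Next I would use planarity to force these two paths to intersect. Because $b \prec a$ but $b$ must be reached from $v_{\pi_{i+1}}$ (the ``far'' endpoint, larger in the order) while $a$ is reached from $v_{\pi_i}$ (the ``near'' endpoint), the path $Q_a$ from $v_{\pi_i}$ to $a$ and the path $Q_b$ from $v_{\pi_{i+1}}$ to $b$ are drawn inside the disk bounded by the arc of $\SSS_i$ and the chords to $u$; their endpoints interleave around the boundary of that region in the cyclic order $v_{\pi_i}, b, a, v_{\pi_{i+1}}$ (reading along the outer arc). Two curves inside a disk whose four endpoints interleave on the boundary circle must cross, and since $\EE$ has no edge crossings this forced crossing can only be realized at a shared vertex $z \in Q_a \cap Q_b$. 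This is exactly the configuration depicted in Fig.~\ref{figure_subset_partition_conti}.

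Having found such a common vertex $z$, I would derive the numerical contradiction. The subpath of $Q_a$ from $v_{\pi_i}$ to $z$ together with the subpath of $Q_b$ from $z$ to $v_{\pi_{i+1}}$ forms a walk from $v_{\pi_i}$ to $v_{\pi_{i+1}}$ inside $\SSS_i$, so
\[
d_{\SSS_i}\BIGP{v_{\pi_i}, v_{\pi_{i+1}}} \;\le\; d_{\SSS_i}(v_{\pi_i}, z) + d_{\SSS_i}(z, v_{\pi_{i+1}}) \;\le\; |Q_a| + |Q_b| \;\le\; d_{\SSS_i}(a, v_{\pi_{i+1}}) + d_{\SSS_i}(v_{\pi_i}, b),
\]
while on the other hand, since $z$ lies on a shortest(-enough) path, $d_{\SSS_i}(v_{\pi_i}, b) \le d_{\SSS_i}(v_{\pi_i}, z) + d_{\SSS_i}(z, b)$ and similarly for $a$; combining these with the defining inequalities $d_{\SSS_i}(a, v_{\pi_i}) \le d_{\SSS_i}(a, v_{\pi_{i+1}})$ and $d_{\SSS_i}(v_{\pi_i}, b) > d_{\SSS_i}(b, v_{\pi_{i+1}})$ yields a strict inequality of the form $d < d$. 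I expect the main obstacle to be the bookkeeping in this last step: one has to be careful that the segments of $Q_a$ and $Q_b$ split at $z$ really do recombine into short enough paths witnessing both $d_{\SSS_i}(v_{\pi_i}, a)$-type and $d_{\SSS_i}(v_{\pi_i},b)$-type bounds simultaneously, and that the $L_i$/$R_i$ membership inequalities are applied with the correct (weak versus strict) direction so that the final contradiction is genuinely strict rather than a harmless equality. The planarity/interleaving argument itself is geometrically clear from Fig.~\ref{figure_subset_partition_conti}; turning ``the curves must cross'' into ``they share a vertex'' just needs the observation that a crossing of two edge-disjoint paths in a planar embedding can only occur at a common endpoint of edges, i.e. at a vertex.
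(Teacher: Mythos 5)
Your proposal is correct and follows essentially the same route as the paper's proof: assume $b \prec a$ and use the outer-planar embedding to force the shortest path from $v_{\pi_i}$ to $a$ and the shortest path from $b$ to $v_{\pi_{i+1}}$ to meet at a common vertex. The only difference is the endgame: the paper case-splits on whether that meeting vertex lies in $L_i$ or $R_i$, while you add the triangle inequalities through it, giving $d_{\SSS_i}(v_{\pi_i},b)+d_{\SSS_i}(a,v_{\pi_{i+1}}) \le d_{\SSS_i}(v_{\pi_i},a)+d_{\SSS_i}(b,v_{\pi_{i+1}})$, which contradicts the strict combination of the $L_i$/$R_i$ membership inequalities — so the bookkeeping step you flag does indeed close.
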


\begin{proof}[Proof of Lemma~\ref{lemma_structure_outerplanar_subset_partition}]
For convenience, let $a = \max_{a \in L_i}a$ and $b = \min_{b \in R_i}b$.
Since $L_i \cap R_i = \phi$, we have $a \neq b$.
Assume that $b \prec a$.
Recall that in an outer-planar embedding, the vertices are placed on a circle and the edges are drawn
as straight lines.
Since $\EE$ is an outer-planar embedding, the shortest path from $a$ to
$v_{\pi_i}$ must intersect with the shortest path from $b$ to $v_{\pi_{i+1}}$.
Let $c$ be the vertex for which the two paths meet. Since $L_i$ and $R_i$ form a partition of $\SSS_i$,
either $c\in L_i$ or $c\in R_i$. 

If $c \in L_i$, then $d_{\SSS_i}\BIGP{c,v_{\pi_i}} \le d_{\SSS_i}\BIGP{c,v_{\pi_{i+1}}}$ by definition,
which implies that $d_{\SSS_i}\BIGP{b, v_{\pi_i}} \le d_{\SSS_i}\BIGP{b, v_{\pi_{i+1}}}$,
a contradiction to the fact that $b \in R_i$.
On the other hand, if $c \in R_i$, then $d_{\SSS_i}\BIGP{c,v_{\pi_i}} > d_{\SSS_i}\BIGP{c,v_{\pi_{i+1}}}$,
and we have $d_{\SSS_i}\BIGP{a, v_{\pi_i}} > d_{\SSS_i}\BIGP{a, v_{\pi_{i+1}}}$, a contradiction to the fact $a \in L_i$.
In both cases, we have a contradiction. Therefore we have $a \prec b$.
\end{proof}

\begin{figure}[t]
\centering \fbox{\begin{minipage}{0.5\textwidth}
\centering
\includegraphics[scale=0.8]{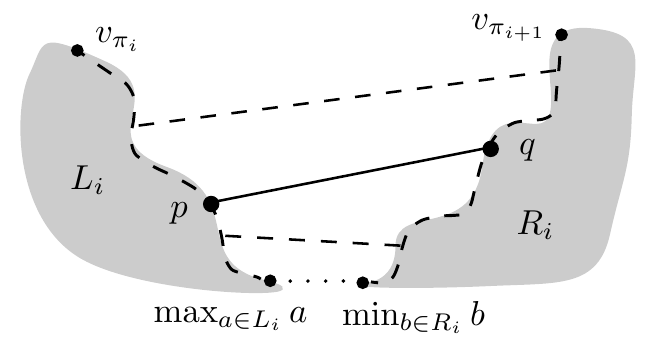}
\caption{Partition of $\SSS_i$ into $L_i$ and $R_i$.}
\label{figure_subset_partition}
\end{minipage}}
\end{figure}
%

Let $\ell(v) \equiv d_G(u,v)$ and
$\ell_i(v) \equiv \min\BIGLR{\{}{\}}{d_{\SSS_i}(v_{\pi_{(i)}}, v), d_{\SSS_i}(v_{\pi_{(i+1)}}, v)}$, 
for any $1\le i<deg(u)$ and $v \in \SSS_i$.
Observe that $\ell(v) = \ell_i(v)+1$, for any $1\le i<deg(u)$ and $v \in \SSS_i$.
Now consider the set of the edges connecting $L_i$ and $R_i$.
Note that, this is exactly the set of edges connecting vertices on the shortest path between
$v_{\pi_i}$ and $\max_{a \in L_i}a$ and vertices on the shortest path between
$v_{\pi_{i+1}}$ and $\min_{b \in R_i}b$.
We have the following lemma, which states that, when the vertices are classified by their distances
to $u$, these edges can only connect vertices between neighbouring sets and do not form any crossing.
See also Fig.~\ref{figure_subset_partition}.

\begin{lemma} \label{lemma_structure_outerplanar_subset_joining}
For any edge $(p,q)$, $p\in L_i$, $q\in R_i$, connecting $L_i$ and $R_i$, we have
\begin{itemize}
\item $\BIGLR{|}{|}{\ell(p)-\ell(q)} \le 1$, and
\item $\nexists$ edge $(r,s)$, $(r,s) \ne (p,q)$, $r\in L_i$, $s\in R_i$, such that $\ell(r)=\ell(q)$ and $\ell(p)=\ell(s)$.
\end{itemize}
\end{lemma}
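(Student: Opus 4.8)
The plan is to exploit the planarity of the embedding $\EE$ together with the distance‐based stratification already set up, and to argue both parts by the same kind of crossing argument used in Lemma~\ref{lemma_structure_outerplanar_subset_partition}. Recall that the edges between $L_i$ and $R_i$ are exactly the edges joining the shortest path $P_L$ from $v_{\pi_i}$ to $a := \max_{a\in L_i}a$ to the shortest path $P_R$ from $v_{\pi_{i+1}}$ to $b := \min_{b\in R_i}b$; and that for $v\in\SSS_i$ we have $\ell(v)=\ell_i(v)+1$, so distances to $u$ on the $L_i$ side are measured from $v_{\pi_i}$ and on the $R_i$ side from $v_{\pi_{i+1}}$. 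First I would fix an edge $(p,q)$ with $p\in L_i$, $q\in R_i$, so $p$ lies on $P_L$ at distance $\ell(p)-1$ from $v_{\pi_i}$ and $q$ lies on $P_R$ at distance $\ell(q)-1$ from $v_{\pi_{i+1}}$.

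For the first bullet, I would suppose toward a contradiction that $\ell(p) \le \ell(q)-2$ (the symmetric case $\ell(q)\le\ell(p)-2$ is identical after swapping the roles of $L_i$ and $R_i$). Consider the vertex $q'$ on $P_R$ adjacent to $q$ and one step closer to $v_{\pi_{i+1}}$, i.e.\ with $\ell(q')=\ell(q)-1 \ge \ell(p)+1$. The edge $(p,q)$ together with the initial segments of $P_L$ and $P_R$ and the edge $(u,v_{\pi_i})$, $(u,v_{\pi_{i+1}})$ encloses a region of the embedding; since $q'$ sits strictly further out along $P_R$ than the ``level'' reached by $p$, any path from $q'$ back toward $v_{\pi_i}$ — in particular the sub‐path of $P_R$ from $q'$ to $v_{\pi_{i+1}}$ taken together with $(u,v_{\pi_{i+1}})$ and $(u,v_{\pi_i})$ — must cross $(p,q)$ or an edge of $P_L$, contradicting planarity. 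The cleaner way to phrase this: by Lemma~\ref{lemma_structure_outerplanar_subset_partition} every vertex of $L_i$ precedes every vertex of $R_i$ in $\prec$, so $P_L$ lies in the arc of the circle before $a$ and $P_R$ in the arc after $b$; an edge $(p,q)$ with endpoints at very different depths would, combined with the monotone ordering guaranteed along each shortest path, force two such crossing edges, and a single crossing already contradicts that $\EE$ is planar.

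For the second bullet, suppose there were two distinct edges $(p,q)$ and $(r,s)$ with $p,r\in L_i$, $q,s\in R_i$, $\ell(r)=\ell(q)$ and $\ell(p)=\ell(s)$. Combined with the first bullet we get $|\ell(p)-\ell(q)|\le 1$ and $|\ell(r)-\ell(s)|\le 1$, and the two equalities force $\{\ell(p),\ell(q)\}$ and $\{\ell(r),\ell(s)\}$ to interleave: on $P_L$ the vertex $p$ is at depth $\ell(p)$ and $r$ at depth $\ell(q)$, while on $P_R$ the vertex $q$ is at depth $\ell(q)$ and $s$ at depth $\ell(p)$ — so along $P_L$ the order of the endpoints is the reverse of their order along $P_R$. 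Drawing the two chords $(p,q)$ and $(r,s)$ between two monotone paths with reversed endpoint orders produces a crossing, again contradicting planarity of $\EE$. (If $\ell(p)=\ell(q)$, the two edges share the same pair of depths on each side and one checks they must then coincide, contradicting $(r,s)\ne(p,q)$.) So no such pair exists.

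The main obstacle I expect is making the crossing arguments rigorous rather than merely pictorial: one has to argue that the shortest paths $P_L$, $P_R$, being induced on the arcs $\SSS_i \cap \{{\preccurlyeq a}\}$ and $\SSS_i \cap \{{\succcurlyeq b}\}$ of the fixed circle, are themselves drawn as ``monotone'' chords so that depth along the path corresponds to position along the arc, and then invoke the Jordan‐curve/planarity fact that two chords whose endpoints alternate around a closed curve must cross. Once that correspondence between distance‐to‐$u$ and circular position along each path is pinned down — which is really just the observation $\ell(v)=\ell_i(v)+1$ applied along a shortest path, plus Lemma~\ref{lemma_structure_outerplanar_subset_partition} to separate the two arcs — both bullets follow from the impossibility of a single edge crossing in $\EE$.
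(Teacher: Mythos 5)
Your handling of the second bullet is essentially the paper's (very terse) argument: two $L_i$--$R_i$ edges whose endpoints interleave in depth along the two shortest paths would have to cross in the outer-planar embedding $\EE$. Your added observation that the degenerate case $\ell(p)=\ell(q)$ forces $(r,s)=(p,q)$ --- because a shortest path contains a unique vertex at each distance from its origin --- is a correct detail the paper omits, and spelling out the monotone-arc correspondence you flag as the "main obstacle" is indeed what a fully rigorous write-up would need.

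The first bullet is where you go astray. The inequality $\BIGLR{|}{|}{\ell(p)-\ell(q)}\le 1$ has nothing to do with planarity: $\ell(v)=d_G(u,v)$ is a graph distance, so for any edge $(p,q)$ of any graph, concatenating $(p,q)$ with a shortest $q$--$u$ path yields a $p$--$u$ walk of length $\ell(q)+1$, whence $\ell(p)\le\ell(q)+1$, and symmetrically $\ell(q)\le\ell(p)+1$. This one-line triangle inequality is exactly the paper's proof. Your crossing argument for this bullet does not actually produce a contradiction: you assert that the subpath of $P_R$ from $q'$ to $v_{\pi_{i+1}}$ "must cross $(p,q)$ or an edge of $P_L$," but $P_R$ and $(p,q)$ are both drawn inside the planar embedding $\EE$, so no crossing between them can ever be forced --- nothing in your setup excludes the configuration you are trying to rule out, and the "contradiction" is spurious. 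The statement is saved only because the metric argument makes the hypothesis $\ell(p)\le\ell(q)-2$ impossible from the start. Replace the planarity detour for the first bullet with the distance argument and the proof is complete.
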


\begin{proof}[Proof of Lemma~\ref{lemma_structure_outerplanar_subset_joining}]
The first half of the lemma follows from the definition of $\ell$.
If $\BIGLR{|}{|}{\ell(p)-\ell(q)} > 1$, without loss of generality, suppose that $\ell(p)> \ell(q)+1$,
by going through $(p,q)$ then following the shortest path from $q$ to $u$, we find a shorter path for $p$,
which is a contradiction.
The second half follows from the fact that $\EE$ is a planar embedding.
\end{proof}

Below we present our structural lemma,
which states that, when the vertices are classified by their distances
to $u$, these edges can only connect vertices between neighbouring sets and do not form any crossing.
%


\begin{lemma} \label{lemma_structure_outerplanar_general_ladder}
Any outer-planar graph $G=(V,E)$ together with an arbitrary vertex $u \in V$ is a general-ladder anchored at $u$,
where the set of vertices in each layer are classified by their distances to the anchor $u$.
\end{lemma}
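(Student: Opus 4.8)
The plan is to build the general-ladder structure directly from the distance layering and the two preceding lemmas. Define, for every integer $j \ge 1$, the layer $\LL_j = \{v \in V : d_G(u,v) = j-1\}$, so that $\LL_1 = \{u\}$ is the anchor. The total order is the one already fixed: label the vertices $v_1 = u \prec v_2 \prec \cdots \prec v_n$ by counter-clockwise appearance on the outer face of $\EE$. First I would verify condition (1), that $N[w] \subseteq \LL_{j-1} \cup \LL_j \cup \LL_{j+1}$ for $w \in \LL_j$: this is immediate, since an edge $(w,w')$ can change the distance to $u$ by at most one, so $|d_G(u,w) - d_G(u,w')| \le 1$. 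Thus each layer is a union of subpaths of the ordered path $v_1 \prec \cdots \prec v_n$ (the induced subgraph on a layer has maximum degree $\le 2$, because an outer-planar graph has no $K_3$ inside one distance class without creating a chord-crossing — more carefully, two vertices of $\LL_j$ adjacent to a common vertex of $\LL_{j-1}$ together with that vertex bound a region, and a third neighbour would force a crossing), and the vertices of each such subpath appear consecutively in $\prec$, so the subpaths are ordered paths.

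The real content is condition (2): for $u', v' \in \LL_j$ with $u' \prec v'$, we must show $\max_{p \in N[u'] \cap \LL_{j+1}} p \preccurlyeq \min_{q \in N[v'] \cap \LL_{j+1}} q$. I would argue this by tracing the analysis through the $\SSS_i$-decomposition. Every vertex of $\LL_j$ for $j \ge 2$ lies in exactly one of the sets $\SSS_0, \SSS_1, \ldots, \SSS_{deg(u)}$ induced by $N(u)$ on the outer cycle, and by the crossing argument already given (no edge joins $\SSS_i$ to $\SSS_{i'}$ for non-adjacent indices), edges from $\LL_j$ to $\LL_{j+1}$ stay within a single $\SSS_i$. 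Within $\SSS_i$, Lemma~\ref{lemma_structure_outerplanar_subset_partition} splits it into $L_i \prec R_i$, and the identity $\ell(v) = \ell_i(v)+1$ tells us the distance layering inside $\SSS_i$ is exactly the "double path" that goes down from $v_{\pi_i}$ through $L_i$ and down from $v_{\pi_{i+1}}$ through $R_i$, with Lemma~\ref{lemma_structure_outerplanar_subset_joining} controlling the rungs between $L_i$ and $R_i$. On each of the two ordered paths (the $L_i$-side and the $R_i$-side) the monotonicity of $\prec$ versus distance is clear, and the fact that rungs between $L_i$ and $R_i$ do not cross (second bullet of Lemma~\ref{lemma_structure_outerplanar_subset_joining}) gives the nesting needed for (2) when $u'$ and $v'$ straddle the $L_i/R_i$ split. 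The remaining case, where $u'$ and $v'$ sit in different blocks $\SSS_i$ and $\SSS_{i'}$, follows because those blocks are ordered consistently with $\prec$ and their $\LL_{j+1}$-neighbourhoods are confined to the respective blocks.

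Assembling these pieces: pick any $u', v' \in \LL_j$ with $u' \prec v'$; locate the block(s) $\SSS_i \ni u'$ and $\SSS_{i'} \ni v'$; if $i < i'$ use block-ordering, and if $i = i'$ use the $L_i$/$R_i$ partition together with the non-crossing of rungs to compare $\max(N[u'] \cap \LL_{j+1})$ with $\min(N[v'] \cap \LL_{j+1})$. The boundary vertices $v_{\pi_i} \in N(u) \cap \LL_2$ need a small separate check since they lie on the edges of two adjacent blocks, but their neighbourhoods in $\LL_3$ are again confined by the same crossing argument. I expect the main obstacle to be the bookkeeping in condition (2) at the interfaces — specifically, making the comparison of $\max$ and $\min$ of $\LL_{j+1}$-neighbourhoods rigorous when $u'$ or $v'$ is one of the "pivot" vertices $v_{\pi_i}$, or when one of the two vertices lies in $L_i$ and the other in $R_i$ so that one must simultaneously invoke Lemma~\ref{lemma_structure_outerplanar_subset_partition} (for the $\prec$-order) and Lemma~\ref{lemma_structure_outerplanar_subset_joining} (for non-crossing of the connecting edges). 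Everything else is a direct consequence of "distance to $u$ changes by at most one along an edge" plus planarity of $\EE$.
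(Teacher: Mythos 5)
Your setup (distance layering, the $\SSS_i$ blocks cut out by $N(u)$, the $L_i/R_i$ split from Lemma~\ref{lemma_structure_outerplanar_subset_partition}, and the non-crossing rungs from Lemma~\ref{lemma_structure_outerplanar_subset_joining}) is exactly the paper's decomposition, and condition (1) is indeed immediate. The gap is in how you handle the interior of a block. You treat each $\SSS_i$ as a ``double path'' descending from $v_{\pi_i}$ through $L_i$ and from $v_{\pi_{i+1}}$ through $R_i$, and assert that ``on each of the two ordered paths the monotonicity of $\prec$ versus distance is clear.'' But $L_i$ and $R_i$ are not paths: each is an arbitrary induced outer-planar subgraph, whose own layers (classified by $\ell_i$) are again collections of subpaths with their own nested block structure. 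When $u'$ and $v'$ both lie inside the same $L_i$ (or the same $R_i$), verifying condition (2) for them --- and even verifying that a single layer restricted to $L_i$ is a union of consecutive subpaths with degree at most $2$ --- is precisely the same statement as the lemma itself, one level down, and it does not follow from Lemmas~\ref{lemma_structure_outerplanar_subset_partition} and \ref{lemma_structure_outerplanar_subset_joining}, which only control the boundary between $L_i$ and $R_i$. Your one-level argument covers the cases where $u'$ and $v'$ lie in different blocks or straddle the $L_i/R_i$ cut, but silently assumes the conclusion for the deeper, within-block cases.

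This is exactly why the paper's proof is by induction on the number of vertices: it applies the inductive hypothesis to $\SSS_0$, $\SSS_{deg(u)}$, and to each $L_i$ and $R_i$ (general-ladders anchored at $v_{\pi_1}$, $v_{\pi_{deg(u)}}$, $v_{\pi_i}$, $v_{\pi_{i+1}}$, with layers given by $\ell_i$, so that $\ell(v)=\ell_i(v)+1$ aligns the sub-ladders with the global layers), and then only needs the no-edges-between-nonadjacent-blocks observation and Lemma~\ref{lemma_structure_outerplanar_subset_joining} to stitch the sub-ladders together under the anchor $u$. To repair your proof, replace the ``double path'' claim by this recursion (or an equivalent strong induction on distance), after which your interface bookkeeping for the pivot vertices $v_{\pi_i}$ and the $L_i/R_i$ rungs is the right remaining work. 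Your parenthetical argument that a layer induces maximum degree $2$ also needs the more careful shortest-path crossing argument (as in Lemma~\ref{lemma_ladder_upward_degree_bound}) rather than the common-neighbour sketch, but that is a minor point compared with the missing recursion.
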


\begin{proof}[Proof of Lemma~\ref{lemma_structure_outerplanar_general_ladder}]
We prove by induction on the number of vertices of $G$. 
First, an isolated vertex is a single-layer general-ladder.
For non-trivial graphs,
let $\SSS_0, \SSS_1, \ldots, \SSS_{deg(u)}$ be the subsets defined as above.
By assumption, the induced subgraphs of $\SSS_0$ and $\SSS_{deg(u)}$ are general-ladders 
with anchors $v_{\pi_1}$ and $v_{\pi_{deg(u)}}$, respectively.
Furthermore, the layers are classified by $\ell-1$. That is, vertex $v$ belongs to layer $\ell(v)-1$.
Similarly, 
the induced subgraphs of $L_i$ and $R_i$ are also general-ladders with anchors $v_{\pi_i}$ and $v_{\pi_{i+1}}$
whose layers are classified by $\ell_i$.


Now we argue that these general-ladders can be arranged properly to form
a single general-ladder with anchor $u$ and layers classified by $\ell$.
Since there is no edge connecting $p$ and $q$ for any $p,q$ with
$v_{\pi_{i-1}} \prec p \prec v_{\pi_i}$ and $v_{\pi_{j-1}} \prec q \prec v_{\pi_j}$, $0<i<j<deg(u)$,
we only need to consider the edges connecting vertices between $L_i$ and $R_i$.
By Lemma~\ref{lemma_structure_outerplanar_subset_joining}, when the general-ladders $L_i$
and $R_i$ are hung over $v_{\pi_i}$ and $v_{\pi_{i+1}}$, respectively,
the edges between them connect exactly only vertices from adjacent layers and do not
form any crossing. Therefore, it constitute as a single general-ladder together with $u$ and the lemma follows.
%
\end{proof}

%

%
\paragraph{Extracting the General-ladder}
Let $\GG=(V,E)$ be the input outer-planar graph and $u \in V$ be an arbitrary vertex. 
We identify the corresponding general-ladder as follows.

Compute the shortest distance of each vertex $v \in V$ to $u$, denoted by $\ell(v)$. 
Let $M=\max_{v\in V}\ell(v)$. 
We create $M+1$ empty queues, $layer(0), layer(1), \ldots, layer(M)$, which will be used to maintain the set of layers.
Retrieve an outer-planar embedding of $G$ and traverse the outer face, starting from $u$,
in a counter-clockwise order.
For each vertex $v$ visited, we attach $v$ to the end of $layer(\ell(v))$.

\bigskip

\begin{theorem} \label{theorem_computing_representation}
Given an outer-planar graph $\GG$ and its outer-planar embedding, 
we can compute in linear time a general-ladder representation for $G$.
\end{theorem}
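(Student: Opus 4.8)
The plan is to show that the extraction procedure described just above the theorem runs in linear time and correctly produces a valid general-ladder representation; correctness is essentially a restatement of Lemma~\ref{lemma_structure_outerplanar_general_ladder}, so the real content is the running-time bookkeeping plus verifying that the two defining properties of a general-ladder can be read off directly from the counter-clockwise traversal.

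First I would establish correctness. By Lemma~\ref{lemma_structure_outerplanar_general_ladder}, when the vertices of $G$ are grouped according to their shortest-path distance $\ell(\cdot)$ to $u$, the resulting grouping, together with the order induced by a counter-clockwise traversal of the outer face, forms a general-ladder anchored at $u$. So it suffices to check that our queues $layer(0),\ldots,layer(M)$ realize exactly this structure: $layer(i)$ holds precisely the vertices $v$ with $\ell(v)=i$, and within each queue the vertices appear in the order in which the traversal visits them, which is the restriction of the global counter-clockwise total order to that layer. Property (1), $N[u]\subseteq \LL_{j-1}\cup\LL_j\cup\LL_{j+1}$, is immediate from the fact that $\ell$ is a graph distance (an edge changes $\ell$ by at most one) together with Lemma~\ref{lemma_structure_outerplanar_subset_joining}. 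Property (2), the non-crossing/monotonicity condition on neighbours in the next layer, follows from the planarity of the embedding exactly as argued in the proof of Lemma~\ref{lemma_structure_outerplanar_general_ladder}; I would simply cite that argument rather than repeat it.

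Next I would handle the running time, step by step. (i) Computing $\ell(v)$ for all $v$: since the graph is unweighted, a single breadth-first search from $u$ gives all distances in $O(|V|+|E|)$ time, and an outer-planar graph has $|E|=O(|V|)$, so this is $O(n)$. (ii) Determining $M=\max_v \ell(v)$ and allocating the $M+1$ queues: $O(n)$, since $M\le n$. (iii) Traversing the outer face of the given embedding in counter-clockwise order: the embedding is supplied (as a combinatorial embedding / rotation system, or equivalently the faces are available), so the outer face can be walked in time proportional to its length, which is $O(n)$ for an outer-planar graph; each visited vertex $v$ is appended to $layer(\ell(v))$ in $O(1)$ amortized time. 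Summing, the total is $O(n)$. I would note in passing the minor subtlety that in an outer-planar embedding every vertex lies on the outer face, so the traversal visits all vertices and no vertex is missed (a vertex appearing multiple times on the outer face walk, which can happen at cut vertices, is simply added the first time it is encountered, or one keeps the first occurrence — either choice gives the correct global order).

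The main obstacle, such as it is, is not difficulty but precision about the input model: one must be clear that ``its outer-planar embedding'' means a representation from which the outer face can be traversed in linear time (a rotation system, or a doubly-connected edge list), since otherwise extracting the cyclic order of the outer face is where a naive approach could cost more than linear time. Everything else is a direct translation of Lemma~\ref{lemma_structure_outerplanar_general_ladder} into the queue-based data structure, so once the input convention is pinned down the proof is a short verification. I would therefore devote most of the written proof to (a) stating the input assumption, (b) the $O(n)$ accounting above, and (c) a one-line appeal to Lemma~\ref{lemma_structure_outerplanar_general_ladder} for correctness.
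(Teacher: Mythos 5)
Your proposal matches the paper's proof: the paper likewise argues that the shortest-path computation is linear because a planar graph has $O(n)$ edges and that the outer-face traversal is linear, and leaves correctness to the preceding structural lemma. Your version just fills in the same steps with more care about the input representation and the queue bookkeeping.
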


\begin{proof}[Proof of Theorem~\ref{theorem_computing_representation}]
Since the number of edges in a planar graph is linear in the number of vertices,
the shortest-path tree computation takes linear time. 
The traversal of the outer face also takes linear time.
\end{proof}

For the rest of this paper we will denote 
the layers of this particular general-ladder representation by $\LL_0, \LL_1, \ldots, \LL_M$.
The following additional structural property comes from the outer-planarity of $\GG$ and our construction scheme.

\begin{lemma} \label{lemma_ladder_upward_degree_bound}
For any $0<i\le M$ and $v \in \LL_i$, we have $\BIGLR{|}{|}{N(v) \cap \LL_{i-1}} \le 2$.
Moreover, if $v$ has two neighbours in $\LL_i$, say, $v_1$ and $v_2$ with $v_1 \prec v \prec v_2$,
then there is an edge joining $v_1$ (and $v_2$, respectively) and each neighbouring vertex of $v$ in $\LL_{i-1}$
that is smaller (larger) than $v$.
%
%
\end{lemma}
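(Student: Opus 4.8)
The plan is to argue directly in the given outer-planar embedding $\EE$, using only that the vertices lie on a circle in the fixed total order with the anchor $u$ as the \emph{minimum} (so every other vertex is strictly $\succ u$), that edges are drawn as pairwise non-crossing chords, and that $\ell(v)=d_G(u,v)$. The single combinatorial tool is the following: if a chord $(x,y)$ together with one of the two circular arcs $A$ joining $x$ and $y$ bounds a region $\RR$ (call $x,y$ the \emph{gateways} of $\RR$), then any walk from a vertex in the interior of $A$ to a vertex outside $\RR$ must pass through a gateway; otherwise some edge of the walk would join an interior vertex of $A$ to a vertex outside $\RR$ and hence cross the chord $(x,y)$, contradicting planarity of $\EE$. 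I may assume $i\ge 2$, since for $i=1$ we have $\LL_0=\{u\}$ and both claims are immediate.

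For the first claim I would assume, for contradiction, that $v\in\LL_i$ has three neighbours $a\prec b\prec c$ in $\LL_{i-1}$; note $v,a,b,c$ are distinct and all $\succ u$. A short case analysis on the position of $v$ relative to $a,b,c$ in the order produces one of the chords $(v,a),(v,b),(v,c)$ and a circular arc through exactly one of $a,b,c$ such that the bounded region $\RR$ encloses that vertex $t\in\{a,b,c\}$, excludes $u$ (this is where $u$ being the minimum enters), and has $v$ together with one other vertex $s\in\{a,b,c\}$ as its gateways. For example, when $v\prec a$ take the chord $(v,b)$ with the arc through $a$, so $t=a$ and $s=b$; the remaining three positions of $v$ are handled the same way. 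Then every walk from $t$ to $u$ passes through $v$ or $s$, so $d_G(t,u)\ge 1+\min\bigl(d_G(v,u),d_G(s,u)\bigr)=1+\min(i,i-1)=i$, contradicting $t\in\LL_{i-1}$.

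For the ``moreover'' part I would fix a neighbour $w\in N(v)\cap\LL_{i-1}$ with $w\prec v$ (a neighbour $\succ v$, paired with $v_2$, is the mirror image). First I claim $w\prec v_1$: since $w$ and $v_1$ lie in different layers they are distinct, and if $v_1\prec w\prec v$ then the chord $(v,v_1)$ with the arc from $v_1$ to $v$ through $w$ encloses $w$, excludes $u$, and has gateways $v$ and $v_1$, both at distance $i$ from $u$; this forces $d_G(w,u)\ge 1+i$, impossible. So $w\prec v_1\prec v$. Now the chord $(v,w)$ with the arc from $w$ to $v$ through $v_1$ encloses $v_1$, excludes $u$, and has gateways $v$ and $w$. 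A shortest path $\pi$ from $v_1$ to $u$, of length $i$, must therefore reach $v$ or $w$; let $j\ge 1$ be the first index along $\pi$ with $p_j\in\{v,w\}$. If $p_j=v$ then $i=|\pi|\ge j+d_G(v,u)=j+i$, impossible; hence $p_j=w$, and $i=|\pi|\ge j+d_G(w,u)=j+(i-1)$ forces $j=1$, i.e.\ the first edge of $\pi$ is $(v_1,w)$, so $(v_1,w)\in E$. Symmetrically, every neighbour $w'\succ v$ of $v$ in $\LL_{i-1}$ satisfies $(v_2,w')\in E$.

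I expect the only real difficulty to be organizational: checking, in each sub-case of the first claim, that $u$ genuinely lies outside the chosen region and that its gateways are as stated, and noting the slightly delicate final step, where it is the fact that the shortest path is \emph{forced to use $w$ as its very first edge} — not merely some inequality on $d_G(v_1,w)$ — that actually yields the edge $(v_1,w)$.
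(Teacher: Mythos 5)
Your proof is correct, and I verified the case analysis it defers: for each of the four positions of $v$ relative to $a\prec b\prec c$ there is indeed an edge of the graph whose chord, together with the arc avoiding $u$, encloses one of the three neighbours and has gateways at distance $i-1$ and $i$ from $u$, giving the contradiction $i-1\ge i$; and in the ``moreover'' part the preliminary claim $w\prec v_1$ and the forced first step of the shortest path both go through. Your route differs from the paper's in its machinery, though the underlying ingredient (outer-planarity plus the distance-layering) is the same. The paper's proof takes the shortest paths themselves as the enclosing structure: for the degree bound it says the shortest paths from the three neighbours to $u$ would surround one of them, and for the second part it considers the shortest paths from $u$ to $v_1$, $v'$, $v_2$ and asserts $v'$ would lie ``in the interior''; both are brief sketches. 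You instead use a single graph edge as a separating chord (a Jordan-type ``gateway'' observation) and extract everything from distance accounting. This buys two things: the first part needs no reasoning about how several shortest paths interact, only one chord per case; and in the second part you obtain the edge $(v_1,w)$ constructively, by showing the length-$i$ shortest path from $v_1$ must exit the region through $w$ on its very first step, which in particular proves the side-correct statement of the lemma (neighbours of $v$ in $\LL_{i-1}$ smaller than $v$ attach to $v_1$, larger ones to $v_2$) that the paper's two-sentence argument does not explicitly separate. The price is the longer, but routine, case bookkeeping you already flagged.
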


\begin{proof}[Proof of Lemma~\ref{lemma_ladder_upward_degree_bound}]
First, since the layers are classified by the distances to the anchor $u$, if $\BIGLR{|}{|}{N(v) \cap \LL_{i-1}} \ge 3$,
then consider the shortest paths from vertices in $N(v) \cap \LL_{i-1}$ to $u$. At least one vertex would be surrounded by
other two paths, contradicting the fact that $\GG$ is an outer-planar graph.


The second part is obtained from a similar argument. Let $v^\prime \in N(v) \cap \LL_{i-1}$ be a neighbour of $v$
in $\LL_{i-1}$. If $v^\prime$ is not joined to either $v_1$ or $v_2$, then consider the shortest paths
from $u$ to $v_1$, $v^\prime$, and $v_2$, respectively. $v^\prime$ would be a vertex in the interior, which
is a contradiction.
\end{proof}


\paragraph{The Decomposition}
The idea behind this decomposition is to help reduce the dependency between vertices of large degrees
and their neighbours such that further techniques can be applied.
To this end, we tackle the demands of vertices from every three layers separately.


For each $0\le i < 3$, let $\RR_i = \bigcup_{j\ge 0}\LL_{3j+i}$.
Let $\GG_i=(V_i, E_i)$ consist of the induced subgraph of $\RR_i$ and the set of edges connecting vertices in $\RR_i$ to
their neighbours. 
Formally, $V_i = \bigcup_{v \in \RR_i}N[v]$ and $E_i = \bigcup_{v \in \RR_i}\bigcup_{u\in N[v]}e(u,v)$.
In addition, we set $d(v) = 0$ for all $v \in \GG_i \backslash \RR_i$. 
Other parameters remain unchanged.



\begin{lemma} \label{lemma_layers_decomp_quality}
Let $f_i$, $0\le i<3$, be an optimal demand assignment function for $\GG_i$.
The assignment function $f = \sum_{0\le i<3}f_i$ is a $3$-approximation of $\GG$.
\end{lemma}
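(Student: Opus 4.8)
The plan is to show two things: that $f = \sum_{0 \le i < 3} f_i$ is a feasible demand assignment for $\GG$, and that its cost is at most $3 \cdot OPT(\GG)$.

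For feasibility, fix a vertex $v \in V$. Then $v \in \LL_j$ for some $j$, hence $v \in \RR_{j \bmod 3}$, so $v$ carries its original demand $d(v)$ in the graph $\GG_{j \bmod 3}$. Since $f_{j \bmod 3}$ is a (feasible) optimal assignment for $\GG_{j \bmod 3}$, and since $N_{\GG_{j\bmod 3}}[v] = N_\GG[v]$ by the definition of $E_i$ (all edges incident to $v \in \RR_i$ are included), we get $\sum_{u \in N_\GG[v]} f_{j \bmod 3}(v,u) \ge d(v)$. Because the $f_i$ are non-negative and $f \ge f_{j \bmod 3}$ pointwise, the same inequality holds for $f$. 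Thus $f$ is feasible for $\GG$.

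For the cost bound, the key point is that multiplicities do not add up badly across the three subproblems. First I would bound $OPT(\GG_i) \le OPT(\GG)$ for each $i$: an optimal assignment for $\GG$, restricted to the pairs $(v,u)$ with $v \in \RR_i$, is feasible for $\GG_i$ (vertices outside $\RR_i$ have demand $0$ there, so nothing is required of them), and since capacities and costs are unchanged, its cost in $\GG_i$ is at most its cost in $\GG$ — here one uses that the multiplicity $x_f(u) = \CEIL{\sum_{v \in N[u]} f(v,u)/c(u)}$ is monotone in the assignment, so dropping terms only decreases it. Hence $w(f_i) = OPT(\GG_i) \le OPT(\GG)$ for each $i$. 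Then for the combined assignment, for each vertex $u$ we have $x_f(u) = \CEIL{\sum_{v \in N[u]} f(v,u)/c(u)}$ where $f(v,u) = \sum_i f_i(v,u)$; rather than trying to relate $x_f(u)$ to $\sum_i x_{f_i}(u)$ directly (which would fail because of the ceilings), I would simply observe that $w(f) = \sum_u w(u) x_f(u)$ and bound the whole thing by $\sum_{i} \sum_u w(u) x_{f_i}(u) = \sum_i w(f_i) \le 3\, OPT(\GG)$ — provided $x_f(u) \le \sum_i x_{f_i}(u)$ for every $u$. This last inequality is exactly $\CEIL{\sum_i a_i} \le \sum_i \CEIL{a_i}$ for non-negative reals $a_i$ (with $a_i = \sum_{v \in N[u]} f_i(v,u)/c(u)$), which is elementary.

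The main obstacle — and the only place where one must be slightly careful — is the interplay of the ceilings in the cost function: naively one might fear that splitting the demand into three graphs triples each multiplicity independently and then the recombination in $\GG$ costs even more. The resolution is the subadditivity of the ceiling, $\CEIL{x+y} \le \CEIL{x} + \CEIL{y}$, which guarantees that recombining the three assignments into $f$ on $\GG$ never costs more than the sum of the three separate costs; combined with $OPT(\GG_i) \le OPT(\GG)$ this gives the factor $3$. A secondary point to verify cleanly is that $N_{\GG_i}[v] = N_\GG[v]$ for $v \in \RR_i$, so that feasibility genuinely transfers; this is immediate from the definition $V_i = \bigcup_{v \in \RR_i} N[v]$ and $E_i = \bigcup_{v \in \RR_i}\bigcup_{u \in N[v]} e(u,v)$.
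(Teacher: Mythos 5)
Your proposal is correct and follows essentially the same route as the paper's own proof: feasibility because every vertex's demand is handled in exactly one $\GG_i$, and the factor $3$ because $OPT(\GG_i) \le OPT(\GG)$ for each $i$. The one place you go beyond the paper is in explicitly verifying $w(f) \le \sum_i w(f_i)$ via subadditivity of the ceiling in the multiplicity function --- a detail the paper leaves implicit but which is indeed needed for the bound to close.
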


\begin{proof}[Proof of Lemma~\ref{lemma_layers_decomp_quality}]
First, for any vertex $v \in V$, the demand of $v$ is considered in $\GG_i$ for some $0\le i<3$
and therefore is assigned by the assignment function $f_i$.
Since we take the union of the three assignments, it is a feasible assignment to the entire graph $\GG$.


Since the demand of each vertex in $\GG_i$, $0\le i<3$, is no more than that of in the original graph $G$,
any feasible solution to $G$ will also serve as a feasible solution to $\GG_i$.
Therefore we have $OPT(\GG_j) \le OPT(G)$, for $0\le j<3$, and the lemma follows.
\end{proof}

\subsection{Removing More Edges} \label{subsection_reduction}

We describe an approach to further simplifying the graphs $\GG_i$, for $0\le i<3$.
Given any feasible demand assignment for $\GG_i$,
we can properly reassign the demand of a vertex to a constant number of neighbours while
the increase in terms of fractional cost remains bounded.


For each $v \in \RR_i$, we sort the closed neighbours of $v$ according to their cost
in ascending order 
such that
$w\BIGP{\pi_v(1)} \le w\BIGP{\pi_v(2)} \le \ldots \le w\BIGP{\pi_v(deg[v])}$, where
$\pi_v: \BIGLR{\{}{\}}{1,2,\ldots,deg[v]} \rightarrow N[v]$ is an injective function.
%
For convenience, we set $\pi_v(deg[v]+1) = \phi$.
Suppose that $v \in \LL_\ell$.
We identify the following four vertices.

\begin{itemize}
\item
Let $j_v$, $1\le j_v \le deg[v]$, be the smallest integer such that $c\BIGP{\pi_v(j_v)} > d(v)$.
If $c\BIGP{\pi_v(j_v)} \le d(v)$ for all $1\le j\le deg[v]$, then we let $j_v = deg[v]+1$.

\item Let $k_v$, $1\le k_v < j_v$, be the integer such that and
$w\BIGP{\pi_v(k_v)} / c\BIGP{\pi_v(k_v)}$ is minimized.
$k_v$ is defined only when $j_v > 1$.

\item Let $p_v = \max_{u \in N[v] \cap \LL_{\ell-1}}u$ and 

$q_v = \max_{u \in N[v] \cap \LL_{\ell+1}}u$.
\end{itemize}

\begin{figure}[h]
\centering
\fbox{\begin{minipage}{0.5\textwidth}
\centering
\includegraphics[scale=0.8]{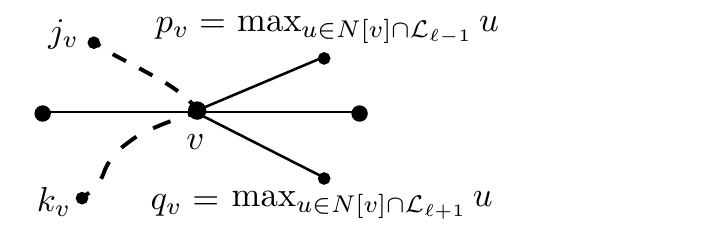}
\caption{Incident edges of a vertex $v \in \LL_\ell$ to be kept.
}
\label{figure_edges_kept}
\end{minipage}}
\end{figure}

Intuitively, $\pi_v(j_v)$ is the first vertex in the sorted list whose capacity is greater than 
$d(v)$,
and $\pi_v(k_v)$ is the vertex with best cost-capacity ratio among the first $j_v-1$ vertices.
$p_v$ and $q_v$ are the rightmost neighbour of $v$ in layer $\LL_{\ell-1}$ and $\LL_{\ell+1}$, respectively.

We will omit the function $\pi_v$ and use $j_v$, $k_v$ to denote $\pi_v(j_v)$, $\pi_v(k_v)$
without confusion.
The reduced graph $\HH_i$ is defined as follows.
Denote the set of neighbours to be disconnected from $v$
by $R(v) = N[v] \backslash \BIGP{\LL_\ell \cup \BIGLR{\{}{\}}{j_v \cup k_v \cup p_v \cup q_v}}$,
and let $\HH_i = \GG_i \backslash \bigcup_{v \in \RR_i}\bigcup_{u \in R(v)}\BIGLR{\{}{\}}{e(u,v)}$.
Roughly speaking, in graph $\HH_i$ we remove the edges which connect vertices in $\RR_i$, say $v$, 
to vertices not in $\RR_i$, except possibly for $j_v$, $k_v$, $p_v$, and $q_v$.
See Fig.~\ref{figure_edges_kept}.
Note that, although our reassigning argument applies to arbitrary graphs,
only when two vertices are unimportant to each other can we remove the edge between them.

\begin{lemma} \label{lemma_bounded_degree_reduction}
In the subgraph $\HH_i$, we have
\begin{itemize}
\item For each $v \notin \RR_i$, at most one incident edge of $v$ which was previously in $\GG_i$
will be removed.

\item For each $v \in \RR_i$, the degree of $v$ in $\HH_i$ is upper-bounded by $6$.

\item $OPT_f(\HH_i) \le 2\cdot OPT_f(\GG_i)$
\end{itemize}
\end{lemma}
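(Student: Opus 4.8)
The plan is to prove the three bullet points of Lemma~\ref{lemma_bounded_degree_reduction} in order, since the first two are combinatorial consequences of the construction and the outer-planarity structure, while the third is the analytic heart of the argument.

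\textbf{First bullet (at most one incident edge of $v \notin \RR_i$ is removed).} Fix $v \notin \RR_i$. An edge $e(u,v)$ present in $\GG_i$ is removed only if $u \in \RR_i$ and $v \in R(u) = N[u]\setminus(\LL_{\ell(u)}\cup\{j_u,k_u,p_u,q_u\})$. So I would count the vertices $u \in \RR_i$ adjacent to $v$ for which $v \notin \{j_u,k_u,p_u,q_u\}$. Here I would use the decomposition structure: $\RR_i = \bigcup_j \LL_{3j+i}$, so consecutive retained layers are three apart, and every neighbour $u$ of $v$ lying in $\RR_i$ sits in $\LL_{\ell(v)-1}$, $\LL_{\ell(v)}$, or $\LL_{\ell(v)+1}$ (by Lemma on $N[u]\subseteq\LL_{\ell-1}\cup\LL_\ell\cup\LL_{\ell+1}$); since only one of these three layers can be of the form $3j+i$, all such $u$ lie in a single layer $\LL_{\ell'}$. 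If $\ell' = \ell(v)-1$ then $v$ lies in layer $\ell'+1$ relative to each such $u$, and $q_u$ is by definition the rightmost neighbour of $u$ in $\LL_{\ell'+1}$; I claim at most one such $u$ can have $v \neq q_u$ while $v$ is still not removed — actually the cleanest statement is that among all $u\in\RR_i\cap\LL_{\ell(v)-1}$ adjacent to $v$, all but at most one have $v$ as their rightmost upward-or-downward neighbour, using Lemma~\ref{lemma_ladder_upward_degree_bound} (upward degree $\le 2$) together with the non-crossing order property of general-ladders; the symmetric cases $\ell' = \ell(v)+1$ (using $p_u$) and $\ell'=\ell(v)$ (using the fact that $\LL_{\ell}$-neighbours are never cut) are handled identically. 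The one exceptional edge is the single edge that may be cut.

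\textbf{Second bullet (degree of $v\in\RR_i$ in $\HH_i$ is at most $6$).} For $v\in\RR_i$ with $v\in\LL_\ell$, after reduction the surviving neighbours of $v$ are: its neighbours inside $\LL_\ell$ (at most $2$, since a layer is a union of subpaths of an ordered path), plus the at-most-four distinguished vertices $j_v,k_v,p_v,q_v$. That already gives $\le 6$. I would also note that edges from $v$ may additionally be removed because some \emph{other} vertex $u\in\RR_i$ chose to disconnect $v$, which only decreases the degree, so the bound is preserved.

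\textbf{Third bullet ($OPT_f(\HH_i)\le 2\cdot OPT_f(\GG_i)$), the main obstacle.} Start from an optimal fractional solution $(f,x)$ of the LP relaxation of $(\ref{ILP_cd})$ on $\GG_i$, and build a feasible fractional solution on $\HH_i$ of at most twice the cost. Only the demand of vertices $v\in\RR_i$ needs to be re-routed (vertices outside $\RR_i$ have demand $0$ in $\GG_i$). For such a $v$, the demand $d(v)$ was spread by $f$ over $N[v]$; I must re-assign it to the surviving neighbours $\{j_v,k_v,p_v,q_v\}\cup(N[v]\cap\LL_\ell)$ only. The key idea is the reassignment argument alluded to before the lemma: any unit of demand $f(v,u)$ sent to a removed neighbour $u$ is redirected. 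If $c(u)\le d(v)$ (so $u$ was among the first $j_v-1$ cheap-but-small-capacity vertices), I reroute that portion to $k_v$, the vertex of best cost-capacity ratio among those; because $\pi_v$ sorted by cost and $k_v$ minimizes $w/c$ over that prefix, the extra fractional multiplicity needed at $k_v$ to absorb a demand mass $\delta$ is $\delta/c(k_v)$ at cost $w(k_v)\delta/c(k_v)\le w(u)\delta/c(u) \le$ the cost that was "morally" being paid at $u$ — here I need the third LP constraint $d(v)x(u)\ge f(v,u)$, which guarantees $x(u)\ge f(v,u)/d(v)$ and hence that the cost $(f,x)$ was already paying at $u$ on account of $v$ is comparable to what I now pay at $k_v$; this is exactly where the integrality-gap-fixing constraint earns its keep. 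If instead $c(u) > d(v)$, I reroute all of $v$'s demand to $j_v$, whose single copy has capacity exceeding $d(v)$, at cost $w(j_v)x(j_v)$-bounded similarly. The surviving $p_v$ and $q_v$ absorb whatever is needed to keep the cut edges in the neighbouring layers consistent. Summing over all $v\in\RR_i$, each original vertex's fractional multiplicity is blown up by at most a constant; the careful accounting — showing the total blow-up factor is exactly $2$ rather than some larger constant — is the delicate part, and I expect it to require splitting the redirected demand into the "$j_v$ part" and the "$k_v$ part" and charging each against a disjoint portion of the original cost $\sum_u w(u)x(u)$, so that the two charges together at most double it. I would finish by verifying feasibility of the new $(f',x')$ on $\HH_i$ (the first LP inequality still holds since total demand of each $v$ is preserved, and the ceiling in $x_{f'}$ only helps), which gives $OPT_f(\HH_i)\le w(f',x') \le 2\,OPT_f(\GG_i)$.
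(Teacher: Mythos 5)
Your first two bullets are fine and match the paper's argument (the rightmost-neighbour/non-crossing property of the general-ladder order for bullet one, and the count $2+4$ for bullet two). The gap is in bullet three, in two places. First, your case split is wrong: you reroute demand to $k_v$ whenever $c(u)\le d(v)$, asserting parenthetically that this means $u$ is among the first $j_v-1$ vertices. It does not — the list is sorted by \emph{cost}, so a neighbour $u$ with $c(u)\le d(v)$ may well appear after $j_v$ in the order, and for such a $u$ the ratio bound $w(k_v)/c(k_v)\le w(u)/c(u)$ is unavailable, since $k_v$ minimizes $w/c$ only over the prefix $\pi_v(1),\ldots,\pi_v(j_v-1)$. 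The correct split (and the one the paper uses) is by position: if $\pi_v^{-1}(u)\ge j_v$ reroute $f_{\GG_i}(v,u)$ to $j_v$, paying at most $w(j_v)\,f_{\GG_i}(v,u)/d(v)\le w(u)\,x_{\GG_i}(u)$ via $w(j_v)\le w(u)$ and the third LP constraint; if $\pi_v^{-1}(u)<j_v$ reroute to $k_v$, paying at most $w(k_v)\,f_{\GG_i}(v,u)/c(k_v)\le w(u)\,f_{\GG_i}(v,u)/c(u)\le w(u)\,x_{\GG_i}(u)$ via the ratio minimality and the second constraint.

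Second, you leave the factor $2$ itself as "the delicate part" and guess that it comes from charging the $j_v$-portion and the $k_v$-portion against disjoint parts of the original cost. That is not where the disjointness comes from, and by itself it does not prevent a single term $w(u)x_{\GG_i}(u)$ from being charged by many different vertices $v$. The actual mechanism is your own first bullet: every removed edge joins some $v\in\RR_i$ to some $u\notin\RR_i$, and each such $u$ loses at most one incident edge, so the per-edge extra cost bound $w(u)\,x_{\GG_i}(u)$ is invoked at most once per vertex $u$. Summing over removed edges then adds at most $\sum_u w(u)x_{\GG_i}(u)=OPT_f(\GG_i)$ to the original cost, which is exactly the doubling. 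Once you fix the case split and close the accounting with bullet one, your construction coincides with the paper's proof.
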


\begin{proof}[Proof of Lemma~\ref{lemma_bounded_degree_reduction}]
For the first part,
let $v \notin \RR_i$ be a vertex and denote $\SSS = N[v] \cap \RR_i$ the set of neighbours of $v$
that are in $\RR_i$. 
By the definition of general-ladders, for any $u \in \SSS$, $u \neq \max_{a \in \SSS}a$,
we have either $p_u = v$ or $q_u = v$, since $v$ serves as the rightmost neighbour of $u$.
Therefore, by our approach, only the edge between $v$ and $\max_{a \in \SSS}a$ will
possibly be removed.

For the second part,
for any $v \in \RR_i$, $v$ has at most two neighbours in the same layer, 
since each layer is a subgraph of an ordered path. 
We have removed all the edges connecting $v$ to vertices not in $\LL_k$,
except for at most four vertices, $j_v$, $k_v$, $p_v$, and $q_v$.
Therefore $deg(v) \le 6$.

Now we prove the third part of this lemma.
Let $f_{\GG_i}$ be an optimal demand assignment for $\GG_i$, and $x_{\GG_i}$ be the corresponding
multiplicity function. Note that, from the second and the third inequalities of $(\ref{ILP_cd})$, 
for each $v \in V$ and $u \in N[v]$, we have
\begin{equation} \label{eqn_lp_multiplicity_constraint}
x_{\GG_i}(u) \ge \max\BIGLR{\{}{\}}{\frac{f_{\GG_i}(v,u)}{d(v)}, \frac{f_{\GG_i}(v,u)}{c(u)}}.
\end{equation}


For each $v \in \RR_i$ and $u \in R(v)$ such that $f_{\GG_i}(v,u) \neq 0$, we modify this assignment as follows.
If $\pi^{-1}_v(u) \ge j_v$, then we assign 
it to $j_v$ instead of to $u$.
Otherwise, we assign it to $k_v$.
That is, depending on whether $\pi^{-1}_v(u) \ge j_v$,
we raise either $f_{\GG_i}(v,j_v)$ or $f_{\GG_i}(v,k_v)$ by the amount of $f_{\GG_i}(v,u)$
and then set $f_{\GG_i}(v,u)$ to be zero.
Note that, after this reassignment, 
the modified assignment function $f_{\GG_i}$ will be a feasible assignment for $\HH_i$ as well.


In order to cope with this change, $x_{\GG_i}(j_v)$ or $x_{\GG_i}(k_v)$ might have to be raised as well
until both the second and the third inequalities are valid again.
If $\pi^{-1}_v(u) \ge j_v$, then $x_{\GG_i}(j_v)$ is raised by at most 
$\max\BIGLR{\{}{\}}{f_{\GG_i}(v,u) / d(v), f_{\GG_i}(v,u) / c(j_v)}$, which is equal to $f_{\GG_i}(v,u) / d(v)$, 
since $c(j_v) > d(v)$.
Hence the total cost will be raised by at most 
$$w(j_v)\cdot \frac{f_{\GG_i}(v,u)}{d(v)} \le 
w(j_v) \cdot \max\BIGLR{\{}{\}}{\frac{f_{\GG_i}(v,u)}{d(v)}, \frac{f_{\GG_i}(v,u)}{c(u)}} \le w(u)\cdot x_{\GG_i}(u),$$
by equation~$(\ref{eqn_lp_multiplicity_constraint})$ and the fact that $w(j_v) \le w(u)$.
Similarly, if $\pi^{-1}_v(u) < j_v$, the cost is raised by at most
$$w(k_v)\cdot \max\BIGLR{\{}{\}}{\frac{f_{\GG_i}(v,u)}{d(v)}, \frac{f_{\GG_i}(v,u)}{c(k_v)}}
= w(k_v)\cdot \frac{f_{\GG_i}(v,u)}{c(k_v)} \le w(u)\cdot \frac{f_{\GG_i}(v,u)}{c(u)} \le w(u)\cdot x_{\GG_i}(u),$$
since we have $\frac{w(k_v)}{c(k_v)} \le \frac{w(u)}{c(u)}$ by definition of $k_v$ and equation~$(\ref{eqn_lp_multiplicity_constraint})$.


In both cases, the extra cost required by this specific demand reassignment between $v$ and $u$ 
is bounded by $w(u)\cdot x_{\GG_i}(u)$. 
Since, by Lemma~\ref{lemma_bounded_degree_reduction}, we have at most one such pair for each $u \notin \RR_i$,
the overall cost is at most doubled and this lemma follows.
\end{proof}

%


%
We also remark that, although $OPT_f(\HH_i)$ is bounded in terms of $OPT_f(\GG_i)$, 
an $\alpha$-approximation for $\HH_i$ is not necessarily a $2\alpha$-approximation for $\GG_i$.
That is, having an approximation $\AAA$ with $OPT(\AAA) \le \alpha\cdot OPT(\HH_i)$ does not
imply that $OPT(\AAA) \le 2\alpha\cdot OPT(\GG_i)$,
for $OPT(\HH_i)$ could be strictly larger than $OPT_f(\HH_i)$.
Instead, to obtain our claimed result, an approximation with a stronger bound,
in terms of $OPT_f(\HH_i)$, is desired.

\subsection{Greedy Charging Scheme} \label{section_greedy_charging}

In this section, we show how we can further approximate the optimal solution for the reduced graph $\HH_i$
by a primal-dual charging argument.
We apply a technique from~\cite{springerlink:10.1007/s00453-009-9336-x} to obtain a feasible
solution for the dual program of the relaxation of
$(\ref{ILP_cd})$, 
which is given in $(\ref{ILP_dual_cd})$ and for which
we will further provide a sophisticated analysis on how the cost of each multiplicity we spent can be distributed
to each unit demand it covered in a careful way.
Thanks to the additional structural property provided in Lemma~\ref{lemma_ladder_upward_degree_bound},
we can further tighten the approximation ratio.

\begin{figure*}
\centering \fbox{\begin{minipage}{0.8\textwidth}
\begin{alignat}{2}
& \mathrm{Maximize} \quad \sum_{u\in V}d(u)y_u \notag \\
& \text{subject to} \notag \\
& \quad c(u)z_u + \sum_{v \in N[u]}d(v)g_{u,v} \le w(u), \quad && u\in V \notag \\
& \quad y_u \le z_v + g_{v,u}, && v \in N[u], \enskip u \in V \notag \\
& \quad y_u \ge 0, \enskip z_u \ge 0, \enskip g_{v,u} \ge 0, && v \in N[u], \enskip u \in V \label{ILP_dual_cd}
\end{alignat}
\end{minipage}}
\end{figure*}

We first describe an approach to obtaining a feasible solution to 
$(\ref{ILP_dual_cd})$ and how a corresponding feasible demand assignment can be found.
%
%
Note that any feasible solution to $(\ref{ILP_dual_cd})$ will serve as a lower bound to any feasible solution of $(\ref{ILP_cd})$ 
by the linear program duality.

%
During the process, we will maintain a vertex subset, $V^\phi$, 
which contains the set of vertices with non-zero unassigned demand.
For each $u \in V$, let $d^\phi(u) = \sum_{v \in N[u] \cap V^\phi}d(v)$ denote the amount of unassigned demand
from the closed neighbours of $u$. 
We distinguish between two cases.
If $c(u) < d^\phi(u)$, then we say that $u$ is heavily-loaded. Otherwise, $u$ is lightly-loaded.
During the process, some heavily-loaded vertices might turn into lightly-loaded due to the demand assignments
of its closed neighbours. For each of these vertices, say $v$, we will maintain a vertex subset $D^*(v)$,
which contains the set of unassigned vertices in $N[v] \cap V^\phi$ when $v$ is about to fall into lightly-loaded.
For other vertices, $D^*(v)$ is defined to be an empty set.


Initially, $V^\phi \equiv \{u: u\in \LL_i, d(u) \neq 0\}$ and all the dual variables are set to be zero.
We increase the dual variable $y_u$ simultaneously, for each $u \in V^\phi$. 
To maintain the dual feasibility, as we increase $y_u$, we have to raise either $z_v$ or $g_{v,u}$, for each $v \in N[u]$.
If $v$ is heavily-loaded, then we raise $z_v$. Otherwise, we raise $g_{v,u}$.
Note that, during this process, for each vertex $u$ that has a closed neighbour in $V^\phi$, the left-hand side of the inequality
$c(u)z_u + \sum_{v \in N[u]}d(v)g_{u,v} \le w(u)$ is constantly raising.
As soon as one of the inequalities $c(u)z_u + \sum_{v \in N[u]}d(v)g_{u,v} \le w(u)$ is met with equality (saturated)
for some vertex $u \in V$, we
perform the following operations.

If $u$ is lightly-loaded, we assign all the unassigned demand from $N[u] \cap V^\phi$ to $u$.
In this case, there are still $c(u) - d^\phi(u)$ units of capacity free at $u$.
We assign the unassigned demand from $D^*(u)$, if there is any, to $u$ until either all the demand from $D^*(u)$ is assigned
or all the free capacity in $u$ is used.
On the other hand, if $u$ is heavily-loaded, we mark it as heavy and delay the demand assignment from its closed neighbours.

Then we set $\QQ_u \equiv N[u] \cap V^\phi$
and remove $N[u]$ from $V^\phi$. 
Note that, due to the definition of $d^\phi$, even when $u$ is heavily-loaded, we still update $d^\phi(p)$ for each $p \in V$
with $N[p]\cap N[u] \neq \phi$, if needed, as if the demand
was assigned. During the above operation, some heavily-loaded vertices might turn into lightly-loaded
due to the demand assignments (or simply due to the update of $d^\phi$).
For each of these vertices, say $v$, we set $D^*(v) \equiv N[v] \cap (V^\phi \cup \QQ_u)$.
Intuitively, $D^*(v)$ contains the set of unassigned vertices from $N[v] \cap V^\phi$ when $v$ is about to fall into lightly-loaded.

This process is continued until $V^\phi = \phi$.
For those vertices which are marked as heavy, 
we iterate over them according to their chronological order of being saturated and
assign at this moment all the remaining unassigned demand from their
closed neighbours to them.
A high-level description of this algorithm is given in Fig.~\ref{algorithm_greedy_charging}.

\begin{figure*}[t]
\rule{\linewidth}{0.2mm}
\medskip
{{\sc Algorithm} {\em Greedy-Charging}}

\begin{algorithmic}[1]
\STATE $V^\phi \longleftarrow \{u: u\in V, d(u)\neq 0 \}$, $V^* \longleftarrow V$.
\STATE $d^\phi(u) \longleftarrow \sum_{v \in N[u]}d(v)$, for each $u \in V$.
\STATE $w^\phi(u) \longleftarrow w(u)$, for each $u \in V$.
\STATE Let $\QQ$ be a first-in-first-out queue.
\WHILE{$V^\phi \neq \phi$}
	\STATE $r_v \longleftarrow w^\phi(v) / \min\{c(v), d^\phi(v)\}$, for each $v \in V^*$.
	\STATE $u \longleftarrow arg\min\{r_v: v \in V^*\}$. [$u$ is the next vertex to be saturated.]
	\STATE $w^\phi(v) \longleftarrow w^\phi(v) - w^\phi(u)$, for each $v \in V^*$.
	\STATE
	\IF{$d^\phi(u) \le c(u)$}
		\STATE Assign the demand from $N[u] \cap V^\phi$ to $u$.
		\STATE Assign $c(u) - d^\phi(u)$ amount of unassigned demand from $D^*(u)$, if there is any, to $u$.
	\ELSE
		\STATE Attach $u$ to the queue $\QQ$ and mark $u$ as {\it heavy}.
	\ENDIF
	\STATE Let $\SSS_u \longleftarrow N[u] \cap V^\phi \cup \{u\}$.
	\STATE Remove $N[u]$ from $V^\phi$ and update the corresponding $d^\phi(v)$ for $v \in V$.
	\STATE For each $v \in V^*$ such that $d^\phi(v)=0$, remove $v$ from $V^*$.
	\FORALL{vertex $v$ becomes lightly-loaded in this iteration}
		\STATE $D^*(v) \longleftarrow N[v] \cap (V^\phi \cup \SSS_u)$.
	\ENDFOR
\ENDWHILE
\WHILE{$\QQ \neq \phi$}
	\STATE Extract a vertex from the head of $\QQ$, say $u$.
	\STATE Assign the unassigned demand from $N[u]$ to $u$.
\ENDWHILE
\STATE

\end{algorithmic}
\rule{\linewidth}{0.2mm} 
\caption{The high-level pseudo-code for the primal-dual algorithm.} 
\label{algorithm_greedy_charging}
\vspace{-14pt}
\end{figure*}


Let $f^*: V \times V \rightarrow R^+\cup \{0\}$ denote the resulting demand assignment function, and 
$x^*: V \rightarrow Z^+\cup \{0\}$ denotes the corresponding multiplicity function.
The following lemma bounds the cost of the solution produced by our algorithm.

\begin{lemma} \label{lemma_primal_dual_ladders}
For any $\HH_i$ obtained from a general-ladder $\GG_i$, we have $w(f^*) \le 7\cdot OPT_f(\HH_i)$.
\end{lemma}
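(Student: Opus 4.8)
The plan is to establish the $7$-approximation in two stages: first a weak duality argument showing the algorithm produces both a feasible demand assignment $f^*$ for $\HH_i$ and a feasible dual solution, and then a careful charging argument that distributes the primal cost onto the dual objective with overhead at most $7$. For the first stage I would verify that when the main loop terminates, $V^\phi=\phi$, so every vertex with positive demand has had its demand fully assigned --- either directly, when its (closed) neighbourhood was removed because a lightly-loaded vertex got saturated, or via the second (queue) loop, where every heavy vertex eventually absorbs the remaining unassigned demand of its closed neighbours. Here I must use that $\HH_i$ has the degree structure from Lemma~\ref{lemma_bounded_degree_reduction}: vertices in $\RR_i$ have degree at most $6$, and each vertex outside $\RR_i$ lost at most one incident edge, so the demand of a vertex in $\RR_i$ can still be routed to its retained special neighbours $j_v,k_v,p_v,q_v$ (and in-layer neighbours). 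I would also check dual feasibility: the constraints $c(u)z_u+\sum_{v\in N[u]}d(v)g_{u,v}\le w(u)$ are maintained as invariants (we stop raising at saturation, recorded through $w^\phi$), and $y_u\le z_v+g_{v,u}$ holds because for each $u\in V^\phi$ and each $v\in N[u]$ we raise exactly one of $z_v,g_{v,u}$ at the same rate as $y_u$ until $u$ leaves $V^\phi$.

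For the second stage --- the heart of the proof --- I would bound $w(f^*)=\sum_u w(u)x^*(u)$ by charging the cost of each multiplicity of each vertex $u$ to the dual variables $y_v$ of the demands $v$ that $u$ serves. The guiding identity is that when $u$ is saturated, $w(u)=c(u)z_u+\sum_{v\in N[u]}d(v)g_{u,v}$, and for each such $v$ in $\QQ_u$ we have $y_v\ge z_u$ (if $u$ was heavily-loaded when processing $v$) or $y_v\ge g_{u,v}$ (if lightly-loaded), so the cost $w(u)$ is approximately covered by $\sum_{v} d(v)y_v$ over the demands assigned to $u$. The multiplicity $x^*(u)=\lceil (\sum_{v}f^*(v,u))/c(u)\rceil$ introduces a rounding loss, which is where the constant grows: I expect a split into the "fractional" part $\sum_v f^*(v,u)/c(u)$, charged cleanly, plus at most one extra unit $w(u)$ from the ceiling, which must be charged to a single demand or a bounded set of demands near $u$ using the third ILP constraint $d(v)x(u)\ge f(v,u)$ (via Lemma~\ref{lemma_redundancy_ILP_additional_constraint}) to guarantee that some $d(v)$ assigned to $u$ is comparable to $c(u)$. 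The heavy vertices need separate accounting: a heavy $u$ absorbs demand only in the second loop, but the demand it absorbs was already "virtually" removed from $V^\phi$ when $u$ saturated, so the $y_v$ values of those demands were frozen at that time and still pay for $w(u)$; the subtlety is that a demand $v$ may be assigned partly to a lightly-loaded vertex and partly (its $D^*$ leftover) to a later vertex, so I must make sure each $d(v)y_v$ is not overcharged. The bound $|N(v)\cap\LL_{i-1}|\le 2$ from Lemma~\ref{lemma_ladder_upward_degree_bound}, together with the degree-$6$ bound, is what caps the total number of vertices any single demand can be charged by, and tuning that count is what yields $7$ rather than a larger constant.

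The main obstacle I anticipate is the bookkeeping around the ceiling/rounding loss combined with the $D^*$ mechanism: one has to argue that the "extra" integral unit at each vertex, and the deferred assignments from $D^*(v)$ at vertices that transitioned from heavy to light, can all be absorbed into the dual objective without double-counting any $d(v)y_v$. Concretely, I would prove a per-vertex inequality of the form $w(u)x^*(u)\le \big(\text{const}\big)\cdot\sum_{v:\,f^*(v,u)>0} d(v)y_v$ where the constant is $1$ for the fractional part plus a bounded number of units from rounding, and then sum over $u$, using that each demand $v$ is charged by at most $6$ vertices (from the degree bound) and with the structural refinement exactly by few enough that the grand total is $7\sum_v d(v)y_v \le 7\,OPT_f(\HH_i)$. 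The last inequality is just weak LP duality applied to the feasible dual solution constructed in stage one. I would also double-check the heavily-loaded case does not lose an extra factor: since a heavy $u$ has $d^\phi(u)>c(u)$ at saturation, the $c(u)z_u$ term alone, paid by $\sum_v d(v)y_v\ge c(u)z_u$ over its unassigned neighbours, already dominates $w(u)$ up to the rounding unit, so heavy vertices are in fact the easy case.
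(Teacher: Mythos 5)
Your skeleton is the same as the paper's: certify dual feasibility of the variables grown by the algorithm, charge the cost $w(u)x^*(u)$ of every opened vertex to the dual values $y_v$ of nearby demands, cap the number of times any unit of demand is charged by the closed-degree bound of Lemma~\ref{lemma_bounded_degree_reduction}, and close with weak duality, $\sum_u d(u)\,deg[u]\,y_u \le 7\sum_u d(u)y_u \le 7\cdot OPT_f(\HH_i)$. However, two steps in your plan would not go through as written. The central one is your per-vertex inequality $w(u)x^*(u)\le C\cdot\sum_{v:\,f^*(v,u)>0}d(v)y_v$, which charges only demands actually assigned to $u$. For a heavy vertex $u$ with $x^*(u)=1$ this can fail: when $u$ saturates, $N[u]$ is removed from $V^\phi$, but the demands counted in $d^\phi(u)$ at that moment may later be absorbed by other vertices through the $D^*(\cdot)$ mechanism, so by the time the queue is processed the demand actually assigned to $u$ can be an arbitrarily small fraction of $c(u)$, while $u$ still pays $w(u)=c(u)z_u$. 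The paper's charging avoids this by charging the single multiplicity of a heavy $u$ to any $c(u)$ units of demand counted in $d^\phi(u)$ at saturation (which exist since $d^\phi(u)>c(u)$), regardless of where those units end up; the per-demand accounting then shows each unit is charged at most once by each member of its closed neighbourhood, except that a unit assigned to a heavy vertex is charged at most twice but then has no other charger, giving at most $deg[u]\cdot y_u\le 7y_u$ per unit. Relatedly, the ceiling loss needs no appeal to Lemma~\ref{lemma_redundancy_ILP_additional_constraint}: lightly-loaded vertices always end with $x^*(u)=1$, and a heavy $u$ with $x^*(u)>1$ has at least $c(u)\cdot(x^*(u)-1)$ units assigned to it, so each such unit is charged at most $2z_u=2y_v$.

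The second point is attribution of the constant: the factor $7$ comes solely from the closed-degree bound $deg[u]\le 7$ in $\HH_i$ (degree at most $6$ plus $u$ itself). Lemma~\ref{lemma_ladder_upward_degree_bound} plays no role in this lemma; it is the extra structural fact the paper exploits afterwards, in Lemma~\ref{lemma_primal_dual_outerplanar_ladder}, to argue that a unit of demand is never fully charged by all seven closed neighbours and thereby improve the bound from $7$ to $6$. Your stage-one feasibility and weak-duality arguments are fine and match the paper.
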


\begin{proof}[Proof of Lemma~\ref{lemma_primal_dual_ladders}]
We argue in the following that, for each $u \in V$, the cost resulted by $u$, which is $w(u)\cdot x^*(u)$,
can be distributed to a certain portion of the demands from the vertices in $N[u]$ such that each unit demand, 
say from vertex $v\in N[u]$, receives a charge of at most $7\cdot y_v$.

Let $u \in V$ be a vertex with $x^*(u) > 0$.
If $u$ has been marked as heavy, then by our scheme, we have $g_{u,v} = 0$ and $y_v = z_u$
for all $v \in N[u]$. Therefore $w(u) = c(u)\cdot z_u = c(u)\cdot y_v$, and for each multiplicity of $u$, 
we need $c(u)$ units of demand from $N[u]$.
If $x^*(u) > 1$, then at least $c(u)\cdot (x^*(u)-1)$ units of demand are assigned to $u$, and 
by distributing the cost to them, each unit of demand gets charged at most twice.
If $x^*(u) = 1$, then we charge the cost to any $c(u)$ units of demand that are counted in $d^\phi(u)$ when
$u$ is saturated. Since $u$ is a heavily-loaded vertex, $d^\phi(u) > c(u)$ and there will be sufficient
amount of demand to charge.

On the other hand, if $u$ is lightly-loaded, then $x^*(u) = 1$ and we have two cases.
If $\sum_{v \in N[u]}d(v) \le c(u)$, then $u$ is lightly-loaded in the beginning and we have
$z_u = 0$ and $y_v = g_{u,v}$ for each $v \in N[u]$, which implies
$w(u) = \sum_{v \in N[u]}d(v)\cdot g_{u,v} = \sum_{v \in N[u]}d(v)\cdot y_v$. 
The cost $w(u)$ of $u$ can be distributed to all the demand
from its closed neighbours, each unit demand, say from vertex $v \in N[u]$, gets a charge of $y_v$.

If $\sum_{v \in N[u]}d(v) > c(u)$, then $u$ is heavily-loaded in the beginning and at some point turned into lightly-loaded.
Let $\UU_0 \subseteq D^*(u)$ be the set of vertices whose removal from $V^\phi$ makes this change.
By our scheme, $z_u$ is raised in the beginning and at some point when $d^\phi(u)$ is about to fall under $c(u)$,
we fixed $z_u$ and start raising $g_{u,v}$ for $v \in D^*(u) \backslash \UU_0$.
Note that, we have $y_{u_0} = z_u$ for all $u_0 \in \UU_0$,
$y_v = z_u + g_{u,v}$ for each $v \in D^*(u) \backslash \UU_0$,
and $g_{u,v} = 0$ for $v \in N[u] \backslash D^*(u) \cup \UU_0$.
Let $d^*_{\UU_0} = c(u) - \sum_{v \in D^*(u) \backslash \UU_0}d(v)$.
We have 
$w(u) = c(u)\cdot z_u + \sum_{v \in N[u]}d(v)\cdot g_{u,v} 
= \left(d^*_{\UU_0}+\sum_{v \in D^*(u) \backslash \UU_0}d(v)\right)\cdot z_u + \sum_{v \in D^*(u)}d(v)\cdot g_{u,v}
= d^*_{\UU_0}\cdot z_u + \sum_{v \in D^*(u) \backslash \UU_0}d(v)\cdot (z_u + g_{u,v})
\le \sum_{v \in \UU_0}d_v\cdot y_v + \sum_{v \in D^*(u) \backslash \UU_0}d(v) \cdot y_v$.
Again the cost $w(u)$ of the single multiplicity can be distributed to the demand of vertices in $D^*(u)$.

Finally, for each unit demand, say demand $d$ from vertex $u$, consider the set of vertices $V_d \subseteq N[u]$ that
has charged $d$.
First, by our assigning scheme, $V_d$ consists of at most one heavily-loaded vertex.
If $d$ is assigned to a heavily-loaded vertex, then, by our charging scheme, 
we have $\left| V_d\right| = 1$, and $d$ is charged at most twice.
Otherwise, if $d$ is assigned to a lightly-loaded vertex,
then, by our charging scheme, each vertex in $V_d$ charges $d$ at most once, disregarding heavily-loaded
or lightly-loaded vertices.

By the above argument and Lemma~\ref{lemma_bounded_degree_reduction}, 
we have $w(f^*) = \sum_{u \in V}w(u)x^*(u)\le \sum_{u \in V}d(u) \cdot deg[u]\cdot y_u 
\le 7\cdot\sum_{u\in V}d(u)\cdot y_u \le 7\cdot OPT_f(\HH_i)$,
where the last inequality follows from the linear program duality of $(\ref{ILP_cd})$ and $(\ref{ILP_dual_cd})$.
\end{proof}

\begin{figure}[h]
\centering
\includegraphics[scale=0.8]{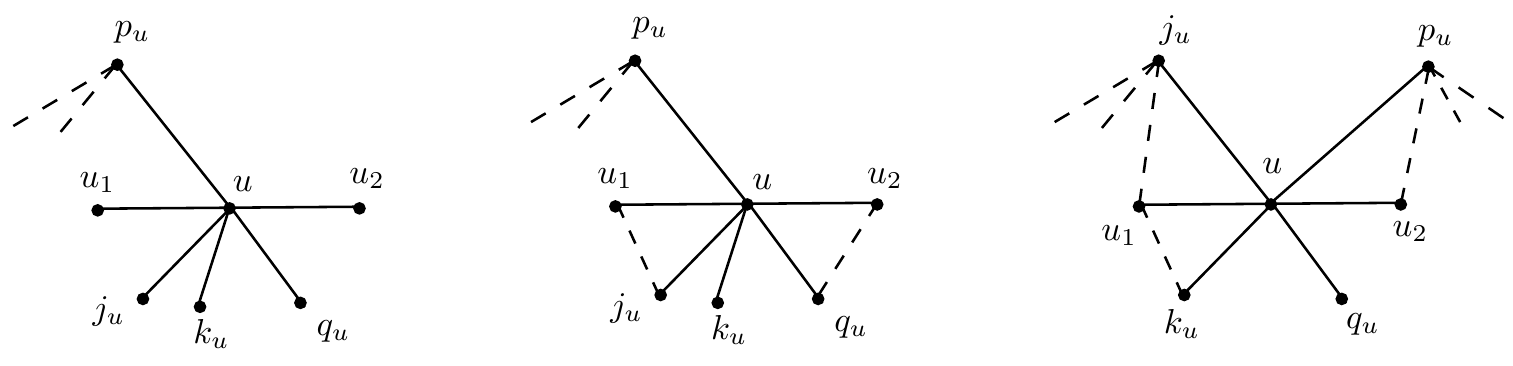}
\vspace{-18pt}
\caption{Situations when a unit demand of $u$ is fully-charged.}
\label{figure_improved_charging_}
\end{figure}

Thanks to the structural property provided in Lemma~\ref{lemma_ladder_upward_degree_bound}, 
given the fact that the input graph
is outer-planar, we can modify the algorithm slightly and further improve
the bound given in the previous lemma.
To this end, we consider the situations when a unit demand from a vertex $u$ with $deg[u] = 7$
and argue that, either it is not fully-charged by all its closed neighbours, or
we can modify the demand assignment, without raising the cost, to make it so.
%

\begin{lemma} \label{lemma_primal_dual_outerplanar_ladder}
Given the fact that $\HH_i$ comes from an outerplanar graph, we can modify the algorithm
to obtain a demand assignment function $f^*$ such that $w(f^*) \le 6\cdot OPT_f(\HH_i)$.
\end{lemma}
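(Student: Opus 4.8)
The plan is to revisit the worst case in the proof of Lemma~\ref{lemma_primal_dual_ladders} and show that, for an outer-planar $\HH_i$, the bound of $7$ on $deg[u]\cdot$(charge-per-unit-demand) can be shaved to $6$ by a local modification of the demand assignment. Recall that the factor $7$ arose solely because a vertex $v \in \RR_i$ can have degree exactly $6$ in $\HH_i$ — namely the at most two neighbours in its own layer $\LL_\ell$ together with the four retained vertices $j_v, k_v, p_v, q_v$ — so $\deg[v] = 7$, and in the lightly-loaded analysis each of these $7$ closed neighbours may charge one unit of $v$'s demand once. So the real question is: when is a single unit of demand of some vertex $u$ with $\deg[u]=7$ actually \emph{fully} charged, i.e. charged by all seven of its closed neighbours, each contributing $y_u$? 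I would enumerate these situations (this is what Fig.~\ref{figure_improved_charging_} depicts) and show each can be avoided.

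First I would set up the framework: fix $u \in \RR_i$ with $\deg_{\HH_i}[u] = 7$, so $N_{\HH_i}[u] = \{u, a, b, p_u, q_u, j_u, k_u\}$ where $a \prec u \prec b$ are the (at most two) same-layer neighbours in $\LL_\ell$, and recall from Lemma~\ref{lemma_ladder_upward_degree_bound} that $|N(u)\cap\LL_{\ell-1}|\le 2$ and that if $u$ has both $a$ and $b$ as same-layer neighbours, then $a$ is joined to every neighbour of $u$ in $\LL_{\ell-1}$ that is smaller than $u$, and $b$ to every one larger than $u$. The key consequence I want to extract is that the ``upward'' neighbour $p_u$ is then already adjacent to $a$ or $b$, so the subgraph induced on $N_{\HH_i}[u]$ is not an independent-ish spider but has extra chords; I would use such a chord to reroute demand. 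Concretely, for a unit demand $d$ of $u$ that would be charged by all of $a,b,p_u,q_u,j_u,k_u$ and $u$ itself, I would argue that $p_u$ (or whichever of the seven is ``redundant'' by the chord structure) can instead be serviced through its neighbour among $\{a,b\}$: since that neighbour already opens at least one multiplicity (it charged $d$), and by the structure of the greedy-charging process the relevant capacity/cost inequalities remain valid after moving $f^*(u,p_u)$ onto $f^*(u,a)$ or $f^*(u,b)$, the modified assignment has cost no larger and now $d$ is charged by at most $6$ vertices.

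The steps in order: (1) identify, from Lemma~\ref{lemma_ladder_upward_degree_bound}, the chord in $N_{\HH_i}[u]$ that is present whenever $\deg[u]=7$; (2) show that in the greedy-charging run, ``fully charged'' forces all seven of $u$'s closed neighbours to be lightly-loaded at the moment they service $d$ (the heavily-loaded case already gives charge $\le 2$ per unit, which times $\deg \le 7$ would still be too much, so I must handle that too — but a heavily-loaded vertex $v$ servicing $d$ has $|V_d|=1$, so that unit is only charged by $v$, contradiction with ``fully charged by seven''; hence fully-charged units are charged only by lightly-loaded vertices, each once); (3) pick the redundant neighbour using the chord and reroute $f^*(u,\cdot)$, verifying feasibility of the modified $f^*$ for $\HH_i$ and that the multiplicity increase at the new recipient is absorbed by the multiplicity it already opened (using equation~$(\ref{eqn_lp_multiplicity_constraint})$-type bounds as in Lemma~\ref{lemma_bounded_degree_reduction}); (4) conclude that after these modifications every unit of demand is charged by at most $6$ closed neighbours, hence $w(f^*) = \sum_u w(u)x^*(u) \le \sum_u d(u)\cdot 6\cdot y_u \le 6\cdot OPT_f(\HH_i)$ by duality of $(\ref{ILP_cd})$ and $(\ref{ILP_dual_cd})$.

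The main obstacle I expect is step~(3): making the rerouting globally consistent. Moving demand around $u$ changes $d^\phi(\cdot)$ and the saturation order for other vertices, so one must be careful that the modification is done \emph{after} the algorithm finishes (as a post-processing on $f^*$, not a change to the run), and that it does not create a new fully-charged unit elsewhere — in particular, dumping $f^*(u,p_u)$ onto $a$ must not push some unit of demand already charged to $a$ up to $7$ charges. I would resolve this by arguing that $a$'s relevant unit of demand (the one charged by $a$'s multiplicity) is $d$ itself or was already charged by strictly fewer than $\deg[a]$ vertices because $a$, lying on the ordered path, has $p_a$ or $q_a$ coinciding with $u$ — so the ``rightmost-neighbour'' bookkeeping from the proof of Lemma~\ref{lemma_bounded_degree_reduction} leaves $a$ with slack. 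Handling this bookkeeping cleanly, and checking the symmetric case where $u$ has only one same-layer neighbour (so the chord argument must instead use $q_u$ and the $\LL_{\ell+1}$ side, or the $j_u/k_u$ structure), is where the real work lies; the cost and feasibility estimates themselves are routine repetitions of the computations already carried out in Lemmas~\ref{lemma_bounded_degree_reduction} and~\ref{lemma_primal_dual_ladders}.
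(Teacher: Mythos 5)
Your high-level plan (isolate the units of demand of a degree-$7$ vertex that would be charged by all seven closed neighbours, use the chords guaranteed by Lemma~\ref{lemma_ladder_upward_degree_bound}, and repair the solution by a post-processing reassignment) is the same strategy the paper follows, but the concrete step you build everything on has a genuine gap, and it is exactly the step you defer as ``where the real work lies.'' First, rerouting $f^*(u,p_u)$ onto a same-layer neighbour $a$ does not, by itself, remove $p_u$'s charge on the unit $d$: under the charging scheme of Lemma~\ref{lemma_primal_dual_ladders}, a saturated lightly-loaded vertex distributes its cost over \emph{all} demand of its closed neighbourhood (or of $D^*(\cdot)$), not only over demand actually assigned to it, and $p_u \in \LL_{\ell-1}$ may be serving many other vertices of $\LL_\ell$, so neither its multiplicity nor its charge can be cancelled by moving $u$'s demand away. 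Second, your cost argument --- ``$a$ already opened a multiplicity, so the increase is absorbed'' --- does not hold: $a$ has its own nonzero demand and no guaranteed spare capacity, so dumping $f^*(u,p_u)$ onto $a$ can force new multiplicities of $a$ whose cost is not bounded by anything you save. The paper's repair is much more restricted and is justified by run-time facts of the greedy process, not by equation~$(\ref{eqn_lp_multiplicity_constraint})$-type estimates: it cancels the multiplicity of $k_u$ (a zero-demand vertex in $\LL_{j+1}$ whose closed neighbourhood contains essentially only $u$, $u_1$, $u_2$ as demand vertices, so its entire multiplicity can be cancelled) and moves that demand to $u_1$, using the fact that $u_1$ was lightly-loaded and still in $V^\phi$ when $d$ was assigned, hence can later serve itself at no extra cost.

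The second missing ingredient is the argument that makes the modification needed only in one narrow sub-case: the saturation-order observation that once $u$ is removed from $V^\phi$, any retained neighbour in $\LL_{j\pm 1}$ whose only demand-neighbour is $u$ has $d^\phi = 0$, is removed from $V^*$, and can never be saturated or charge afterwards. This is what lets the paper conclude, by a case analysis on which of $j_u,k_u,q_u$ are joined to $u_1,u_2$ (Lemma~\ref{lemma_ladder_upward_degree_bound} again), that $\BIGLR{|}{|}{V_d}\le 6$ outright in all but one configuration, with no change to the solution; note also that the case of a single same-layer neighbour already gives $\deg[u]\le 6$ and needs no chord argument at all. Without this mechanism, your plan must make the rerouting work in every fully-charged configuration, which is precisely what your step~(3), as stated, does not establish.
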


\begin{proof}[Proof of Lemma~\ref{lemma_primal_dual_outerplanar_ladder}]
Consider any unit demand, say demand $d$ from vertex $u$ in $\LL_j$, and let $V_d \subseteq N[u]$ be the set of
vertices that has charged $d$ by our charging scheme.


First, we have $\BIGLR{|}{|}{N[u]} \le 7$ by Lemma~\ref{lemma_bounded_degree_reduction}.
By our charging scheme, $\BIGLR{|}{|}{N[u]} < 7$ implies $\BIGLR{|}{|}{V_d} < 7$.
%
In the following, we assume 
$\BIGLR{|}{|}{N[u]}=7$ and argue that either we have $\BIGLR{|}{|}{V_d} < 7$,
or we can modify the solution such that $\BIGLR{|}{|}{V_d} < 7$.
By assumption, $k_u$, $p_u$, $q_u$ are well-defined.
Let $u_1, u_2 \in N(u)\cap \LL_j$ denote the set of neighbours of $u$ in $\LL_j$ such that $u_1 \prec u \prec u_2$.
By Lemma~\ref{lemma_ladder_upward_degree_bound},
depending on the layer to which $j_u$ and $k_u$ belong, we have the following two cases.

\paragraph{Both $j_u$ and $k_u$ belong to $\LL_{j+1}$.}

If two of $\{j_u, k_u, q_u\}$, say, $j_u$ and $k_u$, are not joined to $u_1$ and $u_2$ by any edge,
then at most one of $j_u$ and $k_u$ can charge $d$, since $u$ is the only vertex with possibly non-zero demand in
their closed neighbourhoods.
When the first closed neighbour of $u$ is saturated and $u$ is removed from $V^\phi$,
both $j_u$ and $k_u$ will be removed from $V^*$ and will not be picked in later iterations.
Therefore, at most one of $j_u$ and $k_u$ can charge $d$.

On the other hand, if two of $\{j_u, k_u, q_u\}$, say $j_u$ and $q_u$, are joined to $u_1$ and $u_2$, respectively,
then we argue that at most two out of $\{j_u, u, q_u\}$ can charge $d$.
Indeed, $u_1$, $u$, and $u_2$ are the only vertices with non-zero demands in the closed neighbourhoods of
$\{j_u, u, q_u\}$. After two of $\{j_u, u, q_u\}$ is saturated, $u_1$, $u$, and $u_2$ will
be removed from $V^\phi$. Therefore, at most two out of $\{j_u, u, q_u\}$ can charge $d$.
See also Fig.~\ref{figure_improved_charging_}~(a) and (b).

\paragraph{Only one of $\{j_u, k_u\}$ belongs to $\LL_{j+1}$ and the other belongs to $\LL_{j-1}$.}

Without loss of generality, we assume that $j_u \in \LL_{j-1}$ and $k_u \in \LL_{j+1}$.
By Lemma~\ref{lemma_ladder_upward_degree_bound}, both $j_u$ and $p_u$ are joined either to $u_1$
or $u_2$ separately. Since $\HH_i$ is outerplanar, we have $j_u \prec u \prec p_u$, otherwise $j_u$ will
be contained inside the face surrounded by $p_u$, $u_1$, and $u$, which is a contradiction.
See also Fig.~\ref{figure_improved_charging_}~(c).

If both $k_u$ and $q_u$ are not joined to either $u_1$ or $u_2$, then by a similar argument we used
in previous case, at most one of $k_u$ and $q_u$ can charge $d$.
Now, suppose that, one of $\{k_u, q_u\}$, say, $k_u$, is joined to $u_1$ by an edge.
We argue that, if both $u_1$ and $k_u$ have charged $d$ after $d$ has been assigned in a feasible solution returned by
our algorithm, then we can cancel the multiplicity placed on $k_u$ and
reassign to $u_1$ the demand which was previously assigned to $k_u$ without increasing the cost spent on $u_1$.

If $u_1$ is lightly-loaded in the beginning, then the above operation can be done without extra cost.
Otherwise, observe that, in this case, $u_1$ must have been lightly-loaded when $d$ is assigned
so that it can charge $d$ later. Moreover, $u_1$ is also in the set $V^\phi$ and not yet served, for otherwise
$k_u$ will be removed from $V^*$ and will not be picked later.
In other words, at this moment when $u_1$ becomes lightly-loaded, we have $u_1 \in V^\phi$,
meaning that it is possible to assign the demand of $u_1$ to itself later without extra cost.

On the other hand, if $u_1$ or $k_u$ is the first one to charge $d$, consider the relation between
$q_u$ and $u_2$. If there is no edge between $q_u$ and $u_2$, then $q_u$ will not be picked and will
not charge $d$. Otherwise, if $(q_u,u_2)$ exists in $\HH_i$, then it is a symmetric situation described in the
last sequel.

In both cases, either we have $\BIGLR{|}{|}{V_d} \le 6$ or we can modify the solution returned by the algorithm
to make $\BIGLR{|}{|}{V_d} \le 6$ to hold.
Therefore we have $w(f^*) \le 6\cdot OPT_f(\HH_i)$ as claimed.
\end{proof}

\subsection{Overall Analysis} \label{section_overall_analysis}

We summarize the whole algorithm and our main theorem.
Given an outer-planar graph $G=(V,E)$, we use the algorithm described in \S\ref{subsection_structure}
to compute a general-ladder representation of $G$, followed by applying the decomposition
to obtain three subproblems, $\GG_0$, $\GG_1$, and $\GG_2$.
For each $\GG_i$,
we use the approach described in \S\ref{subsection_reduction} to further
remove more edges and obtain the reduced subgraph $\HH_i$, for which we apply the algorithm described
in \S\ref{section_greedy_charging} to obtain an approximation, which is a demand assignment
function $f_i$ for $\HH_i$. The overall approximation, e.g., the demand assignment function $f$, 
for $G$ is defined as $f = \sum_{0\le i<3}f_i$.

\begin{theorem} \label{thm_overall_analysis}
Given an outerplanar graph $G$ as an instance of capacitated domination, 
we can compute a constant factor approximation for $G$ in $O(n^2)$ time.
\end{theorem}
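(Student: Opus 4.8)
The plan is to assemble the pieces already developed in Sections~\ref{subsection_structure}--\ref{section_greedy_charging} and track both the approximation ratio and the running time through the pipeline. First I would argue correctness of the output. By Theorem~\ref{theorem_computing_representation} we obtain a valid general-ladder representation with layers $\LL_0, \LL_1, \ldots, \LL_M$. The decomposition into $\GG_0, \GG_1, \GG_2$ according to $\RR_i = \bigcup_{j \ge 0}\LL_{3j+i}$ is purely combinatorial, and by Lemma~\ref{lemma_layers_decomp_quality} any feasible assignments $f_0, f_1, f_2$ for the three subproblems combine into a feasible assignment $f = \sum_{0 \le i < 3}f_i$ for $G$ whose cost satisfies $w(f) \le \sum_{0\le i<3}w(f_i)$, and $OPT(\GG_i) \le OPT(G)$ for each $i$. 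So feasibility for $G$ is immediate once each $f_i$ is feasible for $\GG_i$, which in turn follows because the greedy-charging algorithm of \S\ref{section_greedy_charging} terminates with $V^\phi = \phi$, i.e.\ every demand has been assigned.

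Next I would chain the approximation bounds. Fix $i$. Passing from $\GG_i$ to the reduced graph $\HH_i$ only removes edges, so any feasible assignment for $\HH_i$ is feasible for $\GG_i$; by the third bullet of Lemma~\ref{lemma_bounded_degree_reduction} we have $OPT_f(\HH_i) \le 2\cdot OPT_f(\GG_i)$. Running the primal-dual algorithm on $\HH_i$ and invoking the improved charging analysis of Lemma~\ref{lemma_primal_dual_outerplanar_ladder} (which applies because $\HH_i$ comes from an outer-planar graph and the degree bound $deg_{\HH_i}[v] \le 7$ of Lemma~\ref{lemma_bounded_degree_reduction} holds) yields $f_i$ with $w(f_i) \le 6\cdot OPT_f(\HH_i) \le 12\cdot OPT_f(\GG_i) \le 12\cdot OPT_f(G) \le 12\cdot OPT(G)$. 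Summing over $i \in \{0,1,2\}$ gives $w(f) \le 36\cdot OPT(G)$, a constant factor as claimed. (One subtlety I would flag explicitly, echoing the remark after Lemma~\ref{lemma_bounded_degree_reduction}: it is essential that Lemma~\ref{lemma_primal_dual_outerplanar_ladder} bounds $w(f_i)$ against $OPT_f(\HH_i)$ rather than $OPT(\HH_i)$, since only the fractional optimum of $\HH_i$ is controlled by that of $\GG_i$; chaining $OPT$-to-$OPT$ would break.)

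Finally, the running time. By Theorem~\ref{theorem_computing_representation} the general-ladder representation is computed in linear time (the embedding is assumed given; otherwise Bose's $O(n\log^3 n)$ algorithm is invoked once). The decomposition and the edge-removal step defining each $\HH_i$ are linear: for each $v \in \RR_i$ we sort $N[v]$ by cost and identify $j_v, k_v, p_v, q_v$, and since $\sum_v deg[v] = O(n)$ in a planar graph this is $O(n\log n)$ overall. The dominant cost is the greedy-charging algorithm of Fig.~\ref{algorithm_greedy_charging}: it runs for $O(n)$ iterations of the outer while-loop (each iteration saturates at least one vertex and removes $N[u]$ from $V^\phi$), and in each iteration recomputing the ratios $r_v$, selecting the minimizer, and updating $d^\phi$ and $D^*$ takes $O(n)$ time, giving $O(n^2)$; the final queue-draining loop over heavy vertices is $O(n)$. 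Hence the total is $O(n^2)$. The main obstacle in writing this out cleanly is not the ratio arithmetic but making the $O(n^2)$ running-time accounting of the primal-dual phase precise --- in particular, arguing that each of the bookkeeping updates (the $d^\phi$ refresh over all $p$ with $N[p] \cap N[u] \ne \phi$, and the detection of newly lightly-loaded vertices together with setting their $D^*$) can indeed be charged to an $O(n)$ per-iteration budget given the bounded degrees in $\HH_i$.
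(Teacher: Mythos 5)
Your proposal is correct and follows essentially the same route as the paper's proof: the same chain $w(f) \le \sum_i w(f_i) \le 6\sum_i OPT_f(\HH_i) \le 12\sum_i OPT_f(\GG_i) \le 36\cdot OPT(G)$, the same feasibility argument via Lemma~\ref{lemma_layers_decomp_quality} and $\HH_i \subseteq \GG_i$, and the same $O(n)$-iterations-times-$O(n)$-work accounting of the primal-dual phase. The only cosmetic difference is that you sort each $N[v]$ by cost (giving $O(n\log n)$ for the reduction step) whereas the paper avoids sorting with a two-pass edge traversal in $O(n)$; either way the bottleneck remains the $O(n^2)$ greedy-charging phase, so the stated bound is unaffected.
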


\begin{proof}[Proof of Theorem~\ref{thm_overall_analysis}]
First, we argue that the procedures we describe can be done in $O(n^2)$ time.
It takes $O(n\log^3n)$ time to compute an outer-planar representation of $G$~\cite{Bose97onembedding}.
By Theorem~\ref{theorem_computing_representation}, computing a general-ladder representation takes linear time.
The construction of $\GG_i$ takes time linear in the number of edges, which is linear in $n$ since
$G$ is planar.


In the construction of the reduced graphs $\HH_i$, for each vertex $v$, 
although we use a sorted list of the closed neighbourhood of $v$ to define $j_v$, $k_v$, $p_v$, and $q_v$,
the sorted lists are not necessary and $\HH_i$ can be constructed in $O(n)$ time by a careful implementation.
This is done in a two-passes traversal on the set of edges of $\GG_i$ as follows.
In the first pass, we iterate over the set of edges to locate $j_v$, $p_v$, and $q_v$, for each vertex $v\in V$.
Specifically, we keep a current candidate for each vertex and for each edge $(u,v) \in E$ iterated,
we make an update on $u$ and $v$ if necessary.
In the second pass, based on the $j_v$ computed for each $v \in V$, we iterate over the set of edges again
to locate $k_v$.
The whole process takes time linear in the number of edges, which is $O(n)$.


In the following, we explain how the primal-dual algorithm, i.e., the algorithm presented in
Figure~\ref{algorithm_greedy_charging}, can be implemented to
compute a feasible solution in $O(n^2)$ time.
First, we traverse the set of edges in linear time to compute the value $d^\phi(v)$ for each vertex.
In each iteration, the next vertex to be saturated, which is the one with
minimum $w^\phi(v) / \min\BIGLR{\{}{\}}{c(v), d^\phi(v)}$, can be found in linear time.
The update of $w^\phi(v)$ for each $v \in V^*$ described in line $8$ can be done in linear time.
When a vertex $v \in N[u]$ with non-zero demand is removed from $V^\phi$, we have to update the value $d^\phi(v^\prime)$
for all $v^\prime \in N[v]$. By Lemma~\ref{lemma_bounded_degree_reduction}, the closed degree of such vertices is bounded by $7$.
This update can be done in $O(1)$ time.
The construction of $\SSS_u$ can be done in linear time.
Since $d^\phi(v)$ can only decrease, each vertex can turn into lightly-loaded at most once.
Therefore the process time for these vertices is bounded in linear time.
The outer-loop iterates at most $O(n)$ times. Therefore the whole algorithm runs in $O(n^2)$ time.


The feasibility of the demand assignment function $f$ is guaranteed by Lemma~\ref{lemma_layers_decomp_quality}
and the fact that $\HH_i$ is a subgraph of $\GG_i$.
Since $\HH_i \subseteq \GG_i$, the demand assignment we obtained for $\HH_i$ is also a feasible
demand assignment for $\GG_i$. Therefore, $f$ is feasible for $G$.


By the definition of $f$, Lemma~\ref{lemma_primal_dual_outerplanar_ladder}, Lemma~\ref{lemma_bounded_degree_reduction}, 
and Lemma~\ref{lemma_layers_decomp_quality},
we have
\begin{align*}
w(f) & \le \sum_{0\le i<3}w(f_i) \quad  \le 6\cdot \sum_{0\le i<3}OPT_f(\HH_i) \le 12\cdot \sum_{0\le i<3}OPT_f(\GG_i) \\
 & \le 12\cdot\sum_{0\le i<3}OPT(\GG_i) \le 36\cdot OPT(G).
\end{align*}

\end{proof}

\subsection{Extension to Planar Graphs} \label{section_extension_planar}

We describe how our outer-planar result can be extended to obtain a constant factor approximation for planar graphs
under a general framework due to~\cite{174650}.
This is done as follows. Given a planar graph $G$, we generate a planar embedding and retrieve the vertices of each level 
using the linear-time algorithm of Hopcroft and Tarjan \cite{321852}. 

%
\begin{figure}[h]
\centering\fbox{\begin{minipage}{0.8\textwidth}
\centering
\includegraphics[scale=0.7]{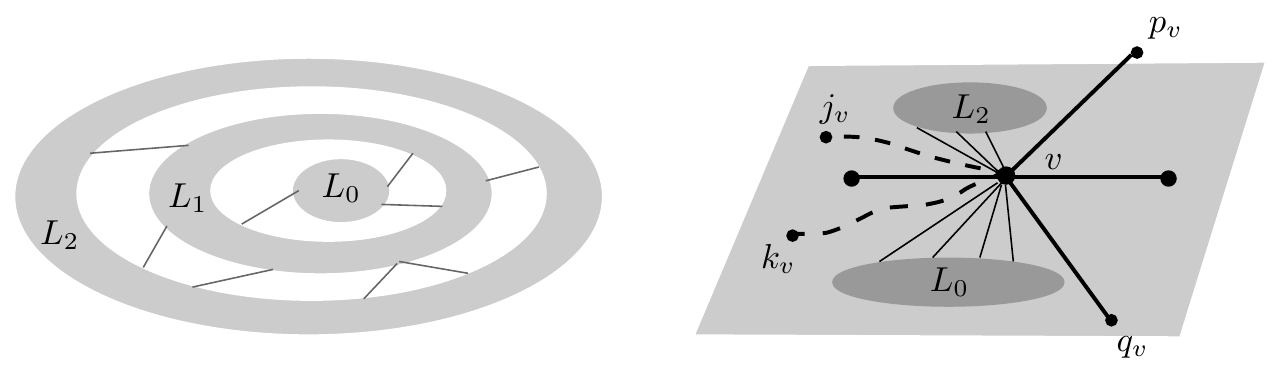}
\caption{(a) $3$-outerplanar graph.
(b) Local connections with respect to a vertex $v$.
Bold edges represent links in the ladder extracted from $L_1$. Thin edges represent
links between $L_1$ and the other two levels, $L_0$ and $L_2$.}
\label{figure_planar_extension_connection_}
\end{minipage}}
\end{figure}
%

Let $m$ be the number of levels of this embedding. 
Let $OPT$ be the cost of the optimal demand assignment of $G$, 
and $OPT_j$ be the cost contributed by vertices at level $j$.
For convenience, in the following, for $j\le 0$ or $j>m$, 
we refer the vertices in level $j$ to an empty set and the corresponding cost $OPT_j$ is defined to be zero.

For $0\le i<3$, we define $C_i$ as $\sum_{0\le j\le \frac{m}{3}}\BIGP{OPT_{3\cdot j-i}+OPT_{3\cdot j-i+2}}$.
Since
$\sum_{0\le i<3}C_i \le 2\cdot OPT$, there exist an $i_0$ with $0\le i_0<2$ such that
$C_{i_0} \le \frac{2}{3}\cdot OPT$.
For each $0\le j\le \frac{m}{3}$, define the graph $G_j$ to be the graph induced by vertices between
level $3\cdot j-i_0$ and level $3\cdot j-i_0+2$. 
The parameters of the vertices in $G_j$ are set as follows.
For those vertices who are from level $3\cdot j-i_0$ and level $3\cdot j-i_0+2$,
their demands are set to be zero. The rest parameters are remained unchanged.
Clearly, $G_j$ is a $3$-outerplanar graph,
and we have $\sum_{0\le j\le \frac{m}{3}}OPT(G_j) \le OPT+\frac{2}{3}\cdot OPT$,
where $OPT(G_j)$ is the cost of the optimal demand assignment of $G_j$.


In the following, we sketch how our algorithm for outerplanar graphs can be modified slightly and
applied to $G_j$ for each $0\le j\le \frac{m}{3}$ to obtain a constant approximation for $G_j$.
For convenience, we denote the set of vertices from the three levels of $G_j$ by $L_0$, $L_1$, and $L_2$, respectively.
See Fig.~\ref{figure_planar_extension_connection_}~(a).

\begin{itemize}
\item{\bf Obtaining the General Ladders.}
For each level $L_i$, $0\le i<3$, which constitutes an outerplanr graph by itself, we define a total order over it according to
the order of appearances of the vertices
in counter-clockwise order. The general-ladder is extracted from $L_1$ as we did before.
Furthermore, for each vertex in the ladder, its incident edges to vertices in $L_0$ and $L_2$ are also included.

\item{\bf Removing Redundant Edges.}
In addition to the four vertices we identified for each vertex $v$ with non-zero demand,
we identify two more vertices, which literally corresponds to the rightmost neighbours of $v$ in level $L_0$ and
$L_2$, respectively. See also Fig.~\ref{figure_planar_extension_connection_}~(b).
Then, the first part of Lemma~\ref{lemma_bounded_degree_reduction} still holds, 
the degree bound provided in the second part is increased by $2$,
and the third part holds automatically without alteration.
\end{itemize}

The rest parts of our algorithm remain unchanged. From the above argument, 
we have the following theorem.

\bigskip

\begin{theorem} \label{theorem_extension_planar}
Given a planar graph $G$ as an instance of capacitated domination, we can compute a constant factor approximation
for the $G$ in polynomial time.
\end{theorem}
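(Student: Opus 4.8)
The plan is to follow the standard Baker-style layering framework of~\cite{174650}, using the outer-planar algorithm of Theorem~\ref{thm_overall_analysis} (in its $3$-outerplanar-adapted form sketched above) as a black box on each layered subgraph. First I would compute a planar embedding of $G$ and the BFS-style level decomposition into levels $1, 2, \ldots, m$ using the linear-time algorithm of Hopcroft and Tarjan~\cite{321852}, and let $OPT_j$ denote the portion of the optimal cost $OPT$ contributed by multiplicities placed at vertices of level $j$, so that $OPT = \sum_j OPT_j$.

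Next I would set up the shifting argument. For each residue $0 \le i < 3$ I would form the graph $G_j^{(i)}$ induced by the three consecutive levels $3j - i, 3j - i + 1, 3j - i + 2$, with the demands of vertices on the two boundary levels $3j-i$ and $3j-i+2$ zeroed out (all other parameters unchanged), exactly as in the construction preceding this theorem. Each such $G_j^{(i)}$ is $3$-outerplanar, so the modified outer-planar algorithm applies and returns a feasible demand assignment whose cost is within a constant factor of $OPT(G_j^{(i)})$. The key accounting step is that $\sum_{0 \le i < 3} C_i \le 2\cdot OPT$, where $C_i = \sum_{j} (OPT_{3j-i} + OPT_{3j-i+2})$ counts each $OPT_\ell$ at most twice (once as a ``left'' boundary and once as a ``right'' boundary), so by averaging there is a residue $i_0$ with $C_{i_0} \le \tfrac{2}{3} OPT$; fixing this $i_0$ gives $\sum_j OPT(G_j^{(i_0)}) \le OPT + C_{i_0} \le \tfrac{5}{3} OPT$, since the interior-level demands of $G$ are covered exactly once across the family and the zero-demand boundary levels only cost what the optimal $G$-solution already pays there.

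Then I would run the modified outer-planar algorithm on each $G_j^{(i_0)}$, obtaining demand assignments $f_j$, and define the global assignment $f = \sum_j f_j$. Feasibility of $f$ for $G$ follows because every vertex of $G$ has positive demand in at least one $G_j^{(i_0)}$ (namely the one in which it is an interior level), and zeroing demands only relaxes feasibility; combining the per-graph constant approximation guarantee with $\sum_j OPT(G_j^{(i_0)}) \le \tfrac{5}{3} OPT$ yields $w(f) \le c \cdot \sum_j OPT(G_j^{(i_0)}) \le \tfrac{5c}{3} OPT$ for the constant $c$ from the $3$-outerplanar version of Theorem~\ref{thm_overall_analysis}, which is itself a constant. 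The running time is polynomial since each $G_j^{(i_0)}$ is a disjoint piece processed in $O(|V(G_j^{(i_0)})|^2)$ time and the pieces partition (up to boundary overlap) the vertex set.

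The main obstacle is verifying that the outer-planar machinery genuinely survives the passage to $3$-outerplanar graphs with the two extra ``boundary'' levels attached. Concretely, one must re-examine Lemma~\ref{lemma_bounded_degree_reduction}: after identifying the two additional rightmost neighbours (one in $L_0$, one in $L_2$) for each demand-bearing vertex of the extracted ladder, the degree bound in part two rises from $6$ to $8$ while parts one and three are unaffected, and then the greedy charging analysis of Lemma~\ref{lemma_primal_dual_outerplanar_ladder} must be rechecked with this larger closed-degree bound to confirm it still yields a constant (the $|V_d| \le 6$ refinement argument relied on the outer-planar face structure of a single layer, so at worst one falls back on the cruder Lemma~\ref{lemma_primal_dual_ladders}-style bound, giving a constant around $8$ per piece rather than $6$). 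As long as that constant is finite — which it is, since all degrees involved are bounded by an absolute constant — the overall ratio is a fixed constant, and the theorem follows; I would not attempt to optimize the constant here.
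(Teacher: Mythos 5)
Your proposal follows essentially the same route as the paper: the Baker-style shifting over BFS levels with the $C_i$ accounting and the choice of $i_0$ satisfying $C_{i_0} \le \frac{2}{3}OPT$, the $3$-outerplanar pieces with boundary demands zeroed, the ladder extracted from the middle level with two extra rightmost neighbours per vertex, and the resulting degree-bound increase in Lemma~\ref{lemma_bounded_degree_reduction}. Your extra caution about whether the refined charging of Lemma~\ref{lemma_primal_dual_outerplanar_ladder} survives (falling back to the cruder Lemma~\ref{lemma_primal_dual_ladders}-style bound if not) is reasonable and only affects the constant, which the theorem does not specify.
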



\section{Conclusion} \label{conclusion}

%
The results we provide
seem to have room for further improvements.
One reason is that, due to the flexibility of the ways the demand can be assigned,
it seems not promising to come up with an approximation threshold.
%
%
However, when the demand cannot be split, it is not difficult to 
prove a constant approximation threshold.
Therefore, it would be very interesting to investigate the problem complexity
on planar graphs.


Second, 
as we have shown in \S\ref{subsection_structure}, the concept of general-ladders does not extend
directly to $k$-outerplanar graphs for $k \ge 2$. 
It would be interesting to formalize and extend this concept to $k$-outerplanar graphs,
for it seems helpful not only to our problem, but also to most capacitated covering problems as well.


\subsubsection*{Acknowledgements.} The authors would like to thank the anonymous referees 
for their very helpful comments on the layout of this work.


\bibliographystyle{siam}

\small
\bibliography{approx_capacitated_domination}

\end{document}